\documentclass[a4paper,UKenglish,10pt]{llncs}
\usepackage[utf8]{inputenc}
\usepackage{amsmath}
\usepackage{amsfonts}
\usepackage{amssymb}
\usepackage{graphicx}
\usepackage{xcolor}
\usepackage[left=1.8in,right=1.8in,top=1.8in,bottom=1.8in]{geometry}
\usepackage{algpseudocode}
\usepackage{algorithm}

%
%

\PassOptionsToPackage{normalem}{ulem}
\usepackage{ulem}
\providecolor{added}{rgb}{0,0,1}
\providecolor{deleted}{rgb}{1,0,0}
\providecolor{mynote}{rgb}{0,0.5,0}


%
%

%
%
\date{October 1, 2018}

\title{Computing the Minkowski Sum of Convex Polytopes in $\Re^d$}

\author{Sandip Das\inst{1} 
	\and Swami Sarvottamananda\inst{2}}

\institute{Indian Statistical Institute, Kolkata (\email{sandipdas@isical.ac.in)} 
	\and Ramakrishna Mission Vivekananda Educational and Research Institute, Howrah (\email{sarvottamananda@rkmvu.ac.in)}}

%
%
%

\begin{document}%
\maketitle%
\begin{abstract}
We propose a method to efficiently compute the Minkowski sum, denoted by binary operator $\oplus$ in the paper,  of convex polytopes in $\Re^d$ using their face lattice structures as input.  In plane, the Minkowski sum of convex polygons can be computed in  linear time of the total number of vertices of the polygons. In $\Re^d$, we first show how to compute the Minkowski sum, $P \oplus Q$, of two convex polytopes $P$ and $Q$ of input size $n$ and $m$ respectively in time $O(nm)$. Then we generalize the method to compute the Minkowski sum of $n$ convex polytopes, $P_1 \oplus P_2 \oplus \cdots \oplus P_n$, in $\Re^d$ in time $O(\prod_{i}^{n}N_i)$, where $P_1$, $P_2$, $\dots$, $P_n$ are $n$ input convex polytopes and for each $i$, $N_i$ is size of the face lattice structure of $P_i$. Our algorithm for Minkowski sum of two convex  polytopes is optimal in the worst case since the output face lattice structure of $P\oplus Q$ for convex polytopes in $\Re^d$ can be $O(nm)$ in worst case.
\end{abstract}

\section{Introduction}
\label{sec:intro}

The Minkowski sum is an important concept, initially defined by Hermann Minkowski [1864-1909], in the field of computational geometry, although uncited. A formal definition is given later in Section~\ref{sec:defn}. The Minkowski sum,  in $\Re^2$ and $\Re^3$,  of polygons and polyhedra respectively, is used in computer graphics, robotic motion planning, computer-aided design and manufacturing. The abundant literature from late nineteenth century on Minkowski sum corroborates the importance and applicability of the concept.
Another recent use of Minkowski sums, that is not highlighted much in the literature, is to compute distances for several types of polyhedral distance functions~\cite{ChewD1985,DasNS2018}, especially when distances from polytopes are involved. Das et al.~\cite{DasNS2018} used Minkowski sums implicitly to compute diameter, width, minimum enclosing/stabbing sphere, maximum inscribed sphere and minimum enclosing/stabbing cylinder for several types of input and problem variations involving convex polygon/polytopes.

The computation of  Minkowski sum for polygons in $\Re^2$ is a well studied problem 
due to its importance and its occurance in several texts~\cite{PreparataS1985,BergOKO2008}. The algorithm to compute Minkowski sum of convex polygons in $\Re^2$ is simple and elegant, which we present in Section~\ref{sec:planeminkowski}. 
In $\Re^3$ too, in spite of increase in complexity, we have some studies done related to Minkowski sums of polyhedra. 
In $\Re^3$, Fogel et al.~\cite{FogelH2005} gave an $O(nm)$ algorithm to compute exact Minkowski sum of two convex polyhedra of $n$ and $m$ vertices respectively, which they claimed can be used to compute Minkowski sum of general polyhedra. In this paper, as a consequence of our results of computing Minkowski sum of convex polytopes in $\Re^d$, we provide an alternative method to compute the Minkowski sum of two or more convex polyhedra in $\Re^3$ using face lattice structures, formally defined e.g. by Edelsbrunner~\cite{Edelsbrunner1987}. The face lattice structures are used extensively for representing polytopes in $\Re^d$.

There are some methods in the literature to compute the Minkowski sum of non-convex polytopes, in the plane and in $\Re^3$.
Chazelle \cite{Chazelle1981} gave a method to decompose polyhedra in $\Re^3$ into a set of convex polyhedra. This concept of decomposition was used by Agarwal et al.~\cite{AgarwalFH2002} to compute Minkowski sum of simple polygons in plane efficiently, and is also effective in $\Re^3$ as claimed by Fogel et al.~\cite{FogelH2005}.

There are several studies on the upper bounds on the number of faces and size of the Minkowski sums of two convex polytopes and also on the upper bounds in the case of more than two convex polytopes. For two convex polytopes, it is well known that the number of features of the Minkowski sum is the product of number of features of two input convex polytopes in general case.
Gritzmann et al.~\cite{GritzmannS1993} studied the complexity of Minkowski sum of polytopes in terms of the constituent polytopes' features. Karavelas et al.~\cite{KaravelasMT2012} studied the complexity of Minkowski sum of three convex polytopes by considering $d$-cyclic polytopes $C_d(n)$ as input. 

In this paper, we use face lattice structures of convex polytopes to compute the Minkowski sum in $\Re^d$. Also, after the computation of the Minkowski sum, we output the face lattice structure of the resulting polytope. If the sizes of face lattice structures of the two input convex polytopes $P$ and $Q$ are $n$ and $m$ respectively, then our algorithm takes $O(nm)$ time to compute the face lattice structure of the polytope $P\oplus Q$. Furthermore we show that we can use the same techniques to compute the Minkowski sum of $N$ convex polytopes $P_1 \oplus P_2 \oplus \cdots \oplus P_N$ in $\Re^d$ in time $O(\prod_{i}^{n}N_i)$, where $P_1$, $P_2$, $\dots$, $P_N$ are $N$ convex polytopes in $\Re^d$ and for each $i$, $1\leq i\leq N$, $N_i$ is the size of the face lattice structure of the polytope $P_i$.

Generally, there are two types of approaches to compute Minkowski sums of convex polytopes. One is that we compute the correct extreme vertices of the final Minkowski sum and then we compute the convex hull of the computed extreme vertices. Second approach is to incrementally construct the polytope which represents the Minkowski sum by geometrical methods as we wrap around the input convex polytopes. The second method is known to work well in $\Re^2$ and $\Re^3$. However, generalizing the method to higher dimensions is not easy due to non-obvious nature of adjacencies and incidences.

In our face lattice method, we give up the approach of using adjacencies to wrap around the input convex polytopes. Instead we use the local information stored in each face node in face lattice to compute the Minkowski sum (the face nodes are nodes in a face lattice). This novel approach enables us to compute the Minkowski sum in the time complexity which is tight for worst case input. This is our contribution in this paper.
 
In Section~\ref{sec:defn} we give some definitions and notations that we use in the paper. 
We briefly present how the Minkowski sum of convex polygons in plane is computed, with respect to our face lattice method, in Section~\ref{sec:planeminkowski} to explain the concepts that we use later.
In Section~\ref{sec:genminkowski} we present the algorithm to compute the Minkowski sum of two polytopes in $\Re^d$. 
The correctness of the algorithm follows from the lemmas presented in the same section.
We generalize the method to efficiently compute the Minkowski sum of multiple polytopes in $\Re^d$  in Section~\ref{sec:multiple}. Finally, we give some concluding remarks in Section~\ref{sec:conclusion}.

\section{Definitions and Notations}
\label{sec:defn}
\begin{figure*}[t]
	\begin{minipage}[b]{0.485\textwidth}
		\centering %
		\includegraphics[width=0.9\columnwidth]{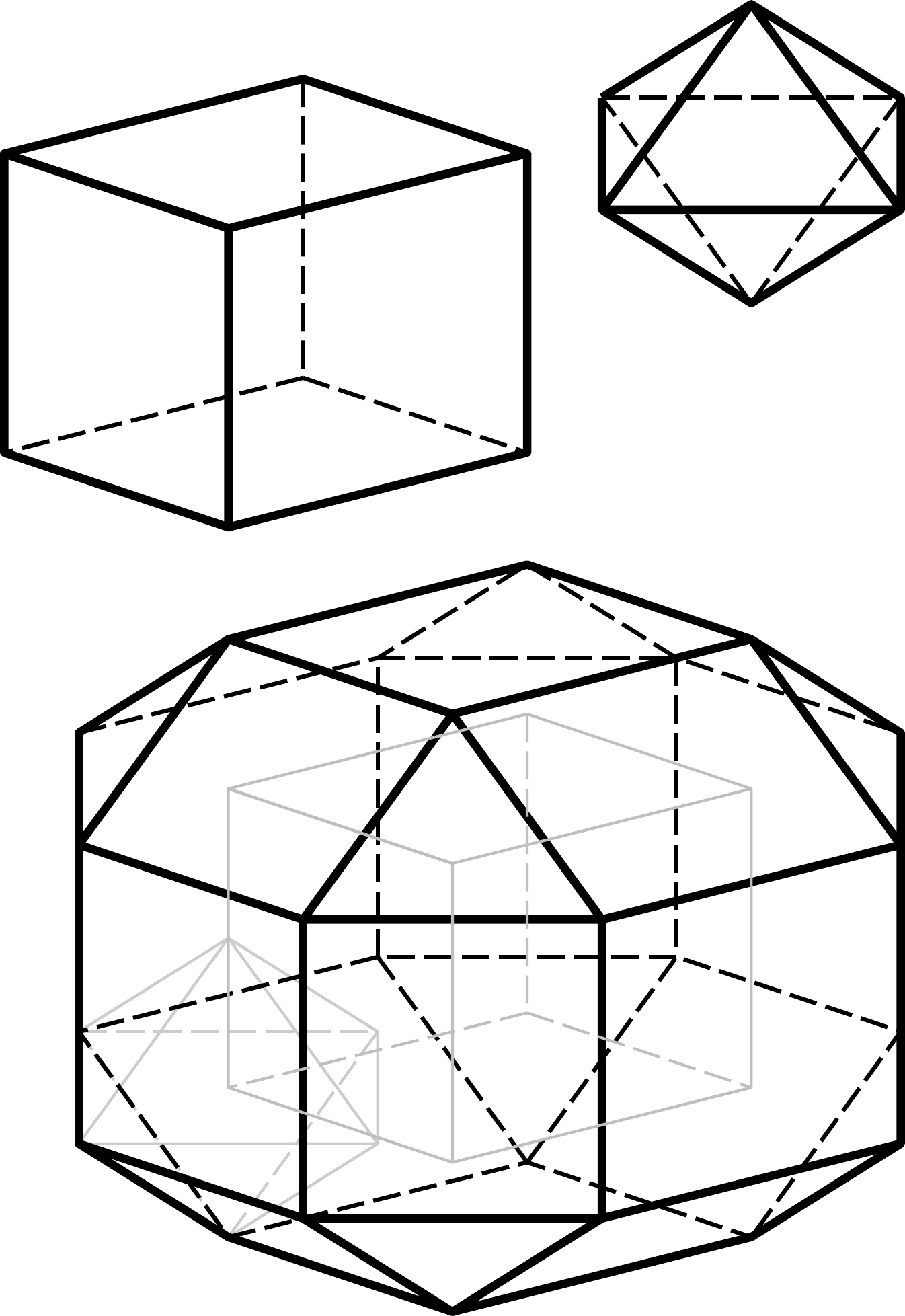} %
		\caption{Minkowski sum of two convex polytopes in $\Re^3$.%
		}%
		\vspace*{0ex}
		\label{fig:minkowski}
	\end{minipage}
	\hfill\quad%
	\begin{minipage}[b]{0.485\textwidth}
		\centering %
		\includegraphics[width=0.9\columnwidth]{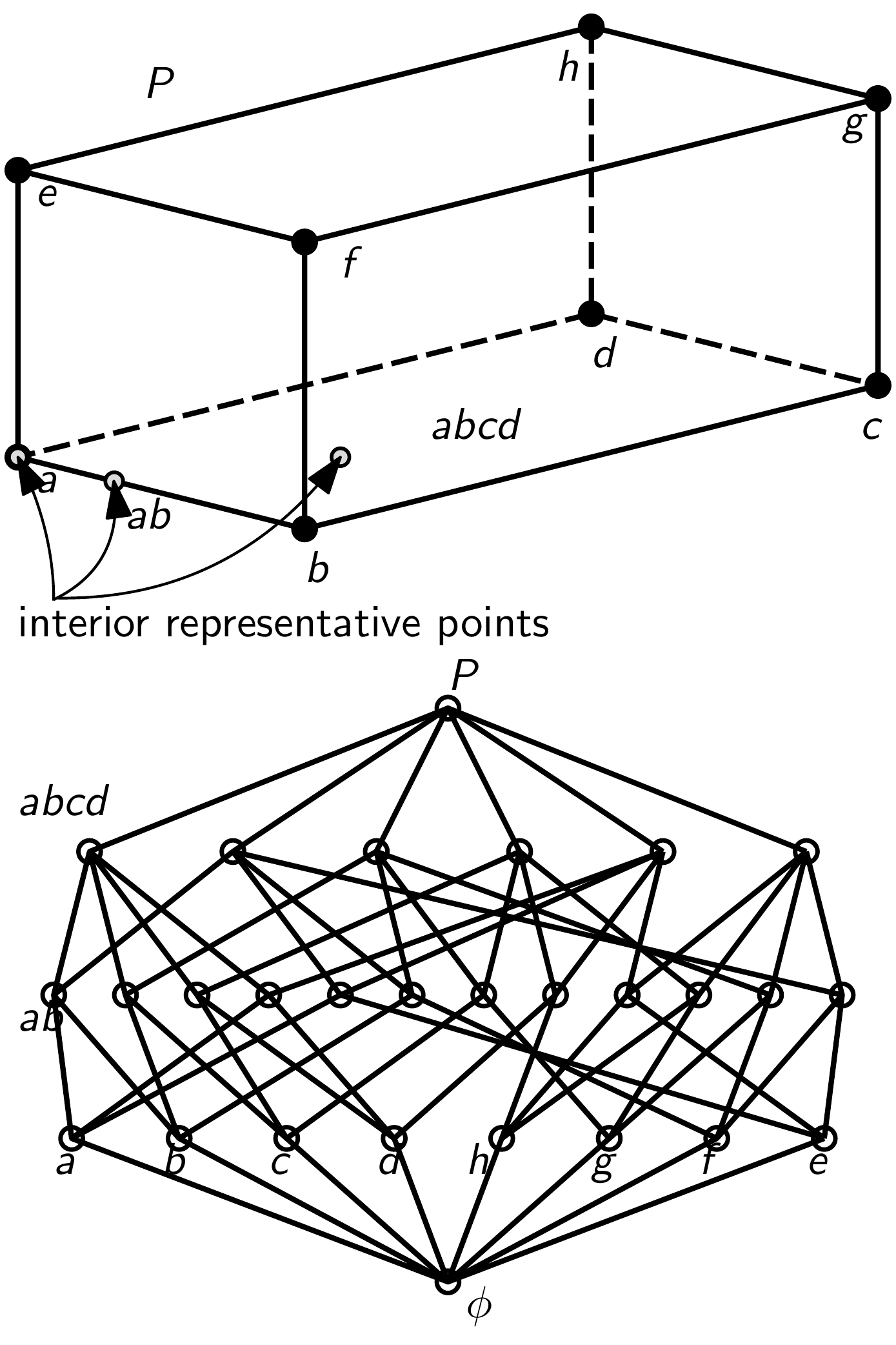} %
		\caption{Face lattice representation of a cuboid.}%
		\label{fig:facelattice}
		\vspace*{2ex}
	\end{minipage}
\end{figure*}

The concept of Minkowski sum is applicable to any geometric objects, convex or non-convex, discrete or continuous, finite or infinite. Fundamentally, the Minkowski sum of  geometric objects is an implicit  geometric object that corresponds to the Minkowski sum of the implicit set of points corresponding to the geometric objects. Formally the Minkowski sum of two set of points is defined as follows. 

\begin{definition}[Minkowski Sum] The \emph{Minkowski sum} of two sets of points $S$ and $S'$ is defined as the set of points $ \{ p + p'\ | \ \forall p \in S, \forall p' \in S'\}$. We denote the Minkowski sum of sets $S$ and $S'$ by the set $S \oplus S'$. 
	
The \emph{Minkowski} sum of two polytopes $P$ and $P'$ in $\Re^d$, denoted by $P \oplus P'$, is a polytope such that $ P\oplus P' = \{ p + p'\ | \ \forall p \in P, \forall p' \in P'\}$.
\end{definition}

The Minkowski sum operator $\oplus$ is associative and commutative and forms a group.
See Figure~\ref{fig:minkowski} for an illustration. We note that polytopes $P$ and $P'$ may have affine dimension less than $d$. In the definition above and elsewhere by sum of points we mean the vector sum  of the points when considered as vectors. We assume an origin of reference.

We give below some terminology and concepts about the polytopes and face lattice structure that we use in this paper.

The \emph{affine space} $A$ of a set of points $S$ in $\Re{^d}$ is the set of points such that they are  \emph{affine} combinations of the points in $S$, that is, $p \in A$ iff $ p = \sum_i \lambda_iq_i$ where $\forall i\ q_i \in S$, $\forall i\  \lambda_i \in \Re$ and $\sum_i \lambda_i = 1$. Every affine space has an associated \emph{affine dimension} which is cardinality of the basis. The \emph{dimension} of a face $f$ of a polytope $P$ is the dimension of the affine space of face $f$.

A hyperplane $h$  is a \emph{tangent} hyperplane to a polytope $P$ if $h$ touches $P$. The intersections $h \cap P$ of all possible tangent hyperplanes $h$ are the \emph{faces} of $P$, which are convex if $P$ is convex. Whenever we mention the \emph{dimension} of a face we mean the affine dimension of theface. If the dimension of a face is $k$ we term it as a $k$-face. A face $f$ is a \emph{subface} of $g$ if $f \subseteq g$. Also if a face $f$ is a subface of $g$ then $g$ is a \emph{superface} of $f$. 

The \emph{face lattice} representation of a polytope is an incidence graph (directed) which has nodes for each face of the polytope and there are two types of arcs from a face $f$, say of dimension $k$, of polytope, one type of arcs point to the immediate superfaces of $f$ of dimension $k+1$ and other type of arcs point to immediate subfaces of $f$ of dimension $k-1$. These are called \emph{incidences}. We also call the incident faces of dimension $k+1$ as \emph{immediate superfaces} and incident faces of dimension $k-1$ as \emph{immediate subfaces} of a face. In addition to obvious faces of the polytope, for the connected polytope of a singe interior, we also have a face of dimension $d$, the \emph{interior face}, and a face of dimension $-1$, the \emph{null face}. See Figure~\ref{fig:facelattice} for an example. 

In this paper, to simplify the notation, we use names like $k$-face and $k$-layer, where $k$ is a variable or number and not part of the name, thus we use the notation 0-face, 1-face, $(d-1)$-face, etc. At other places, however, we use the subscripts and superscripts as usual, such as $P_0$, $P_1$, $P_{d-1}$, etc.

For the purpose of the algorithm presented in paper, we augment the face lattice data structure such that each node representing a face contain an arbitrary strict relative interior point of the face (which is not on the boundary). By a relative interior point, we mean point in the $k$-dimensional interior for a $k$-face and by a strict interior point we mean that this interior point should not belong to any of the subfaces. For vertices, we keep the vertex point itself in the node. 

Let $P$ and $Q$ be the input convex polytopes in $\Re^d$ in this and successive sections. Let $P$ and $Q$ be given as face lattice structures. 

We make a specific assumption that for every  pair of pfaces $f \in P$ and $f' \in Q$, the dimension of $f\oplus f'$ is the sum of dimensions of faces $f$ and $f'$.  This assumption is to ensure that the facets of $P \oplus Q$ are formed by faces of dimensions $k$ and $d-k-1$ of $P$ and $Q$ respectively. We state this assumption as \emph{assumption of non-degeneracy} in the paper. We also assume that origin is in the interior of $P$ and $Q$, though this assumption is really not necessary.

We present an algorithm to compute Minkowski sum of polytopes in $\Re^d$. First we briefly present the well known algorithm to compute Minkowski sum of convex polygons in plane.

\section{Computing the Minkowski Sum of Convex Polygons in Plane}
\label{sec:planeminkowski}
\begin{figure*}[t]
	\begin{minipage}[b]{0.485\textwidth}
		\centering%
		\includegraphics[width=0.9\columnwidth]{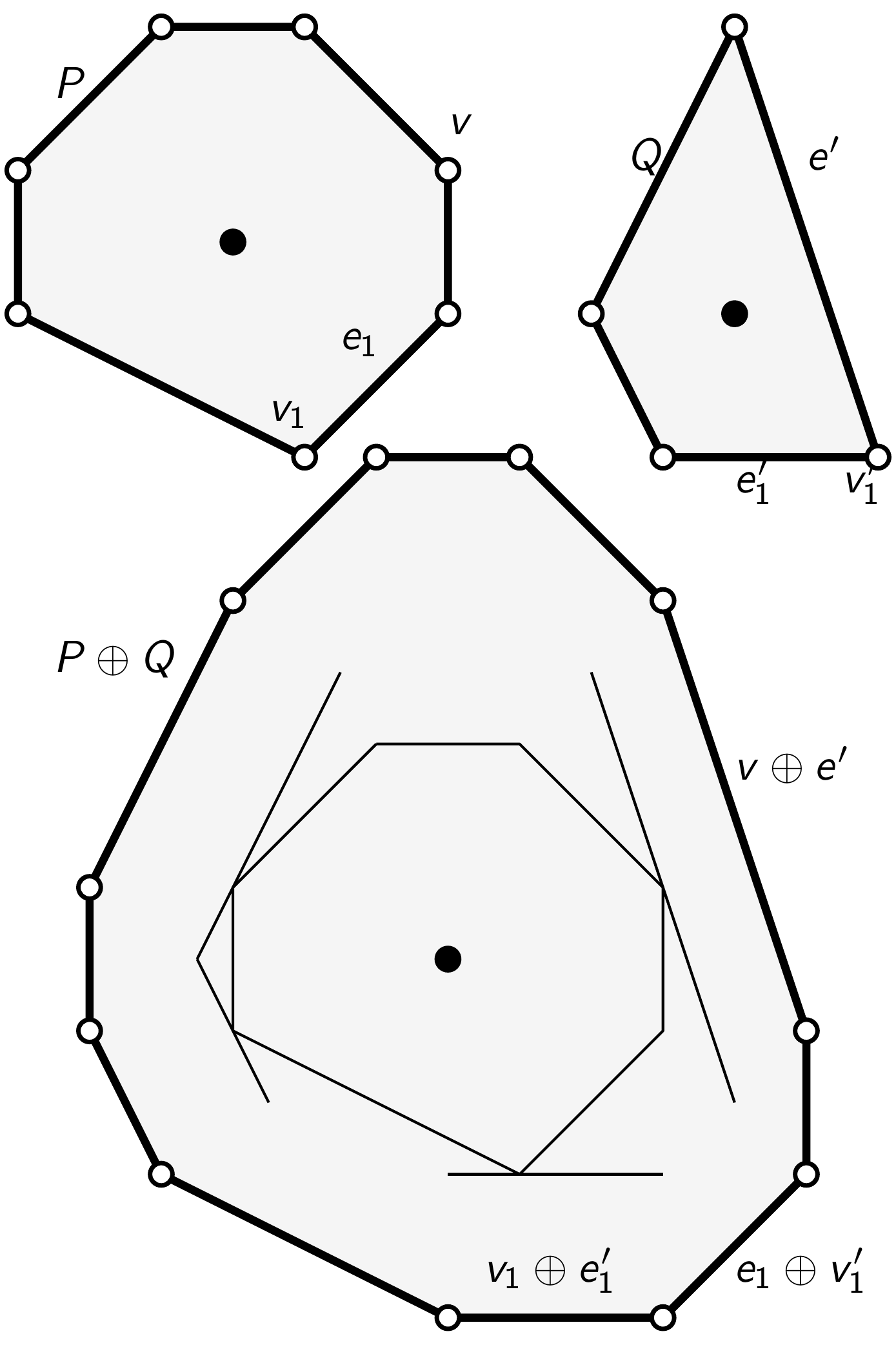}%
		\caption{Computing Minkowski sum of two convex polygons $P$ and $Q$ in $\Re^2$.%
		}%
		\label{fig:planeminkowski}%
		\vspace*{0ex}%
	\end{minipage}
	\hfill\quad%
	\begin{minipage}[b]{0.485\textwidth}
		\centering%
		\includegraphics[width=0.9\columnwidth]{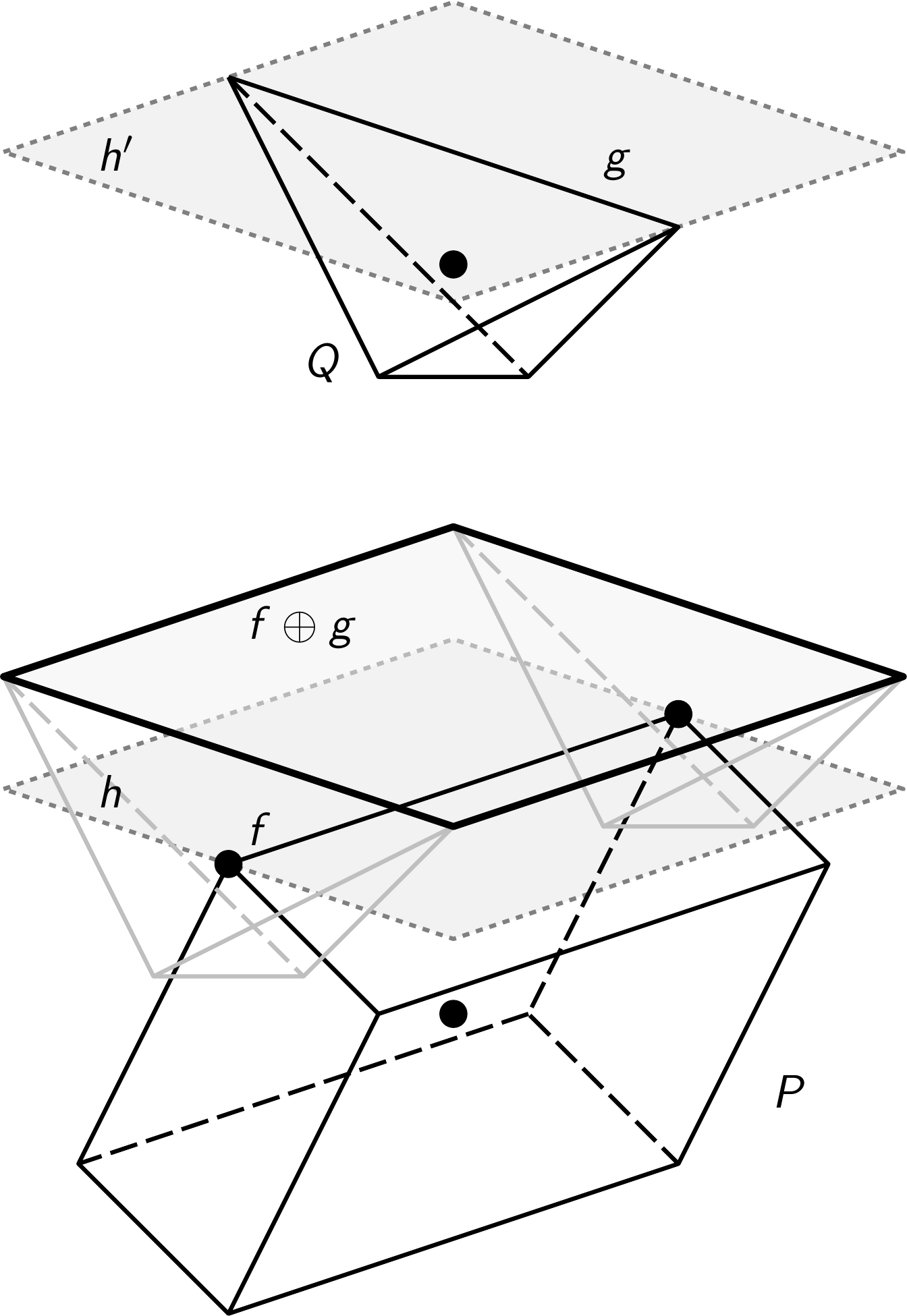}%
		\caption{ $f\oplus g$ is a face of $P \oplus Q$.}%
		\label{fig:lemmatangent}%
		\vspace*{2ex}%
	\end{minipage}
\end{figure*}
Let $P$ and $Q$ be two convex polygons in plane with number of vertices $n$ and $m$ respectively.  It is known that we can compute the Minkowski sum of convex polygons $P$ and $Q$ in time $O(n+m)$~\cite{BergOKO2008}.
We assume without loss of generality that both convex polygons are represented by a linked list of vertices in counter clockwise order.

First take an arbitrary initial edge $e$ of $Q$. By a linear scan of the vertices of $P$ in circular order we can compute the vertex $v_1=v$ of $P$ on which line parallel to $e'_1=e'$ is tangential. The edge $v_1 \oplus e'_1$ will be an edge of $P \oplus Q$. Let the vertices and edges of $P$ in counter clockwise direction from $v_1$ be $v_1$, $v_2$, $\dots$ and $e_1$, $e_2$, $\dots$ respectively. Let the edges and vertices of $Q$ in counter clockwise direction for $e'_1$ be $e'_1, e'_2, \dots$ and $v'_1$, $v'_2$, $\dots$ respectively.

Next we compute the vertex of $P$ starting from $v = v_i$ on which line parallel to next edge  of $e'=e'_k$ in $Q$, that is, $e'_{k+1}$, is tangential. Suppose, that vertex in $v_j$. Then $v_j\oplus e'_k$ will be an edge of $P\oplus Q$. If $v_i \neq v_j$ then $e_{i}\oplus v'_k$, $e_{i+1}\oplus v'_{k}$, $\dots$ $e_{i-1}\oplus v'_k$ are edges of $P\oplus Q$ in counter clockwise order from edge $v_i \oplus e'_k$ followed by  $v_j\oplus e'_{k+1}$. We reassign 
$v = v_j$ and $e' = e'_{k+1}$ and repeat until we come back to initial vertex $v_1$ in $P$ and $e'_1$ in $Q$. See Figure~\ref{fig:planeminkowski} for illustration.

Thus we have the following theorem.

\begin{theorem}
	The Minkowski sum of two convex polygons in plane $P$ and $Q$ of $n$ and $m$ vertices respectively can be computed in $O(n+m)$ time \cite{BergOKO2008}.
\end{theorem}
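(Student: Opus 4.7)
The plan is to establish (i) that the boundary of $P \oplus Q$ is precisely the cyclic concatenation, sorted by outward normal angle, of the edge-vectors of $P$ and $Q$, and (ii) that the algorithm sketched above realizes this concatenation in $O(n+m)$ time.

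For (i) I would argue via the support function. For a direction $\theta \in S^1$ one has $h_{P\oplus Q}(\theta) = h_P(\theta) + h_Q(\theta)$, and when the maximizers on $P$ and $Q$ are unique the point of $P\oplus Q$ attaining this support is the vector sum of the corresponding maximizers on $P$ and $Q$. As $\theta$ rotates counter-clockwise around $S^1$, the maximizer on $P$ (respectively $Q$) changes exactly when $\theta$ crosses the outward normal of an edge of $P$ (respectively $Q$); between two consecutive such crossings the boundary point of $P\oplus Q$ sweeps along the corresponding edge of $P$ or $Q$. Hence, up to fusing parallel-codirectional pairs, the ordered edge-sequence of $P\oplus Q$ is the cyclic merge, by outward normal angle, of the edge-sequences of $P$ and $Q$.

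Given this structural description, the algorithm is essentially a merge of two angular-sorted circular lists and (ii) follows readily. The initialization---locating the vertex $v_1$ of $P$ tangential to the line parallel to $e'_1$---requires a single linear scan in $O(n)$ time. Thereafter each iteration advances exactly one pointer in $P$ or in $Q$ and emits exactly one edge of $P\oplus Q$; because each edge of $P$ and each edge of $Q$ is consumed at most once, the total number of iterations is at most $n+m$ and each iteration costs $O(1)$. Added to the initialization cost this yields the stated $O(n+m)$ bound. The only mild obstacle is the degenerate case in which an edge of $P$ and an edge of $Q$ share the same outward normal direction, in which case they must be fused by vector addition during the merge; this merely requires a tie-breaking rule on equal angles and does not affect the asymptotics.
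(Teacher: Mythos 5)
Your proposal is correct and follows essentially the same route as the paper: the algorithm described is exactly the paper's rotating-tangent traversal, which amounts to merging the two edge sequences of $P$ and $Q$ in order of outward normal angle, with an $O(n)$ initialization scan and $O(1)$ work per emitted edge. Your support-function argument simply makes explicit the correctness claim that the paper leaves to the cited construction, and your handling of codirectional edges matches the standard treatment.
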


We can also compute the Minkowski sum of multiple convex polygons in linear time. But we cannot achieve linear time if we compute Minkowski sums of two polygons at a time. To compute the Minkowski sum of a multiple convex polygons in linear time we have to simultaneously traverse all the convex polygons together in the same counter-clockwise direction and compute the vertices and edges of the Minkowski sum in the sequence that they occur. Thus we have the following theorem for the Minkowski sum of a set of convex polygons.

\begin{theorem}
	The Minkowski sum of a set of convex polygons in plane can be computed in linear time of the input size.
\end{theorem}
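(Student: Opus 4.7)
The plan is to generalize the incremental angle-merging algorithm of the two-polygon case to an arbitrary number of polygons. The key geometric fact is that traversing a convex polygon counter-clockwise starting at its bottommost-leftmost vertex visits the edges in non-decreasing order of their outward-normal angles over $[0,2\pi)$. Moreover the Minkowski sum of convex polygons is itself convex, and its boundary (read counter-clockwise) is precisely the angular merge of translated copies of the edge sequences of the summands. This follows by induction on the number of summands from the preceding two-polygon theorem, using associativity of $\oplus$ together with associativity of merging by angle.

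Concretely I would run the following algorithm. For each input polygon $P_i$, locate its bottommost-leftmost vertex $v_i^{(0)}$ in $O(n_i)$ time and install a cursor $c_i$ at the first outgoing edge of $P_i$ from $v_i^{(0)}$. Set the starting vertex of the output polygon to $v^{(0)} = \sum_i v_i^{(0)}$. Then repeatedly: among the $k$ cursors, find the one $c_{i^\ast}$ whose current edge has the smallest polar angle, append that edge (translated so its tail coincides with the current output endpoint) to the output, and advance $c_{i^\ast}$ along $P_{i^\ast}$. Terminate when every cursor has returned to its initial edge. Closure of the output boundary is automatic because the sum of edge vectors along any closed convex polygon is zero, so once all cursors have cycled the running endpoint returns to $v^{(0)}$. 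The total number of edges emitted equals $N = \sum_i n_i$, the input size.

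The main obstacle is making the per-iteration selection efficient enough for the overall time to be linear in $N$. A naive scan of all $k$ cursors gives $O(k)$ per step and $O(kN)$ total, which is linear only when $k$ is treated as a constant. In general one must maintain the $k$ current cursor angles in a structure supporting constant-time updates, exploiting two invariants: only one cursor changes per iteration, and the replacement angle along any single polygon is strictly larger (within the cyclic order) than the one just removed, by the non-degeneracy of its edges. Under these invariants the task reduces to a $k$-way merge of sorted cyclic sequences of total length $N$ with constant amortized bookkeeping per emitted edge, which yields the stated $O(N)$ bound. Correctness of the resulting polygon then reduces, by induction on the number of summands, to the two-polygon theorem already established in this section.
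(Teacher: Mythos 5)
Your construction is the same one the paper uses: traverse all the summand polygons simultaneously in counter-clockwise order and emit the edges of $P_1\oplus\cdots\oplus P_k$ as an angular merge of the (translated) input edge sequences; the paper's own proof is a one-sentence appeal to exactly this construction. The geometric part of your argument (edge normals appear in sorted cyclic order along each convex polygon, the output boundary is the angular merge, closure because the edge vectors of each summand sum to zero, correctness by induction from the two-polygon theorem) is fine.

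The gap is in the final complexity claim. Selecting, at every step, the cursor whose current edge has the smallest polar angle is precisely a $k$-way merge of $k$ sorted cyclic sequences of total length $N$, and the two invariants you invoke (only one cursor advances per step, and the replacement angle exceeds the one just removed) are just the defining properties of such a merge; they do not by themselves give constant amortized time per emitted edge. In the comparison/algebraic decision-tree model a $k$-way merge costs $\Theta(N\log k)$: encode $N/3$ arbitrary values $x$ as triangles with edge normals at angles $x$, $x+2\pi/3$, $x+4\pi/3$, and a genuinely $O(N)$ merge of these summands would sort in linear time. So the algorithm as you describe it runs in $O(N\log k)$ with a priority queue (or $O(Nk)$ with a scan), which is linear in $N$ only when the number of polygons $k$ is treated as a constant; the asserted \emph{constant amortized bookkeeping per emitted edge} is exactly the step that needs either a justification or a new idea, and none is supplied. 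To be fair, the paper's proof does not address this selection cost either --- you have made explicit the point its one-line argument glosses over --- but as written your proposal does not establish the claimed $O(N)$ bound for an arbitrary number of summands.
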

\begin{proof} The proof follows from the simple construction given above.\qed
\end{proof}

\section{Computing the Minkowski Sum of two Convex Polytopes in higher dimensions}
\label{sec:genminkowski}

The algorithm in plane presented in Section~\ref{sec:planeminkowski} cannot be generalized to $\Re^d$ directly. One of the reasons is that there is no obvious traversal of polytopes in higher dimensions that is sequential and follows adjacencies and incidences, dissimilar to the case in plane. Since we are using usual representation of face lattice structures we cannot retrieve the adjacent faces of a face in $O(1)$ time.

As mentioned earlier, we call a $k$-dimensional face of a polytope is a $k$-face, i.e., 0-face, 1-face, $\dots$, $(d-1)$-face, etc. A 0-face is called a \emph{vertex}, a 1-face is called an \emph{edge}, a 2-face is called a  \emph{(polygonal) face} albeit ambiguously and a 3-face is called a  \emph{(polyhedral) cell}.
For a $k$-face $f$ or $k$-dimensional polytope $P$,  a $(k-1)$-face is called a \emph{facet}, a $(k-2)$-face is called a \emph{ridge} and a $(k-3)$-face is called a \emph{peak} of  the face $f$ or the  polytope $P$ respectively.

Let $P$ and $Q$ be two convex polytopes, such that the face lattice structures of $P$ and $Q$ are of size $n$ and $m$ respectively. For the convex polytope $P$, let $P_k$, $0\leq k < d$, be the set of $k$-faces of $P$, i.e., $P_0$, $P_1$, $\dots$, $P_{d-1}$. are the sets of 0-faces, 1-faces, $\dots$, $(d-1)$-faces of $P$ respectively. Similarly, $Q_0$, $Q_1$, $\dots$, $Q_{d-1}$ are the sets of 0-faces, 1-faces, $\dots$, $(d-1)$-faces of $Q$ respectively. 

Since face lattices are layered graph, we can label all the $k$-faces  as belonging to layer $k$, also named \emph{$k$-layer}, of the given face lattice of a polytope. All the immediate subfaces of layer $k$ will be in layer $k-1$ and all the immediate superfaces will be in layer $k+1$. Let the incidences of polytope $P$ at layer $k$ be denoted by $I_{P,k}$, which contain incidences to both superfaces as well as subfaces.

We show in this section how we can compute the Minkowski sum of convex polytopes $P$ and $Q$.
We do this in the following three stages.
\begin{description}
	\item[Stage 1:] We compute the $(d-1)$-faces, i.e., facets of $P \oplus Q$ which are of the type $f \oplus g$ corresponding to faces in $f\in P_0$ and $g \in Q_{d-1}$ . Optionally, we may also compute the  facets of $P \oplus Q$ which are of the type $f \oplus g$ corresponding to faces in $f\in P_{d-1}$ and $g \in Q_{0}$ .
	\item[Stage 2:] We compute rest of the facets of $P \oplus Q$  which are of the type $f \oplus g$ corresponding to faces in $f\in P_i$ and $g \in Q_{d-1-i}$ for $ 0 < i \leq d-1$.
	\item[Stage 3:] We compute all the $k$-faces of $P \oplus Q$ for $0 \leq k < d-1$ and  their incidences in face lattice of $P \oplus Q$.
\end{description}

\begin{figure*}[t]
	\begin{minipage}[b]{0.485\textwidth}
		\centering%
		\includegraphics[width=0.9\columnwidth]{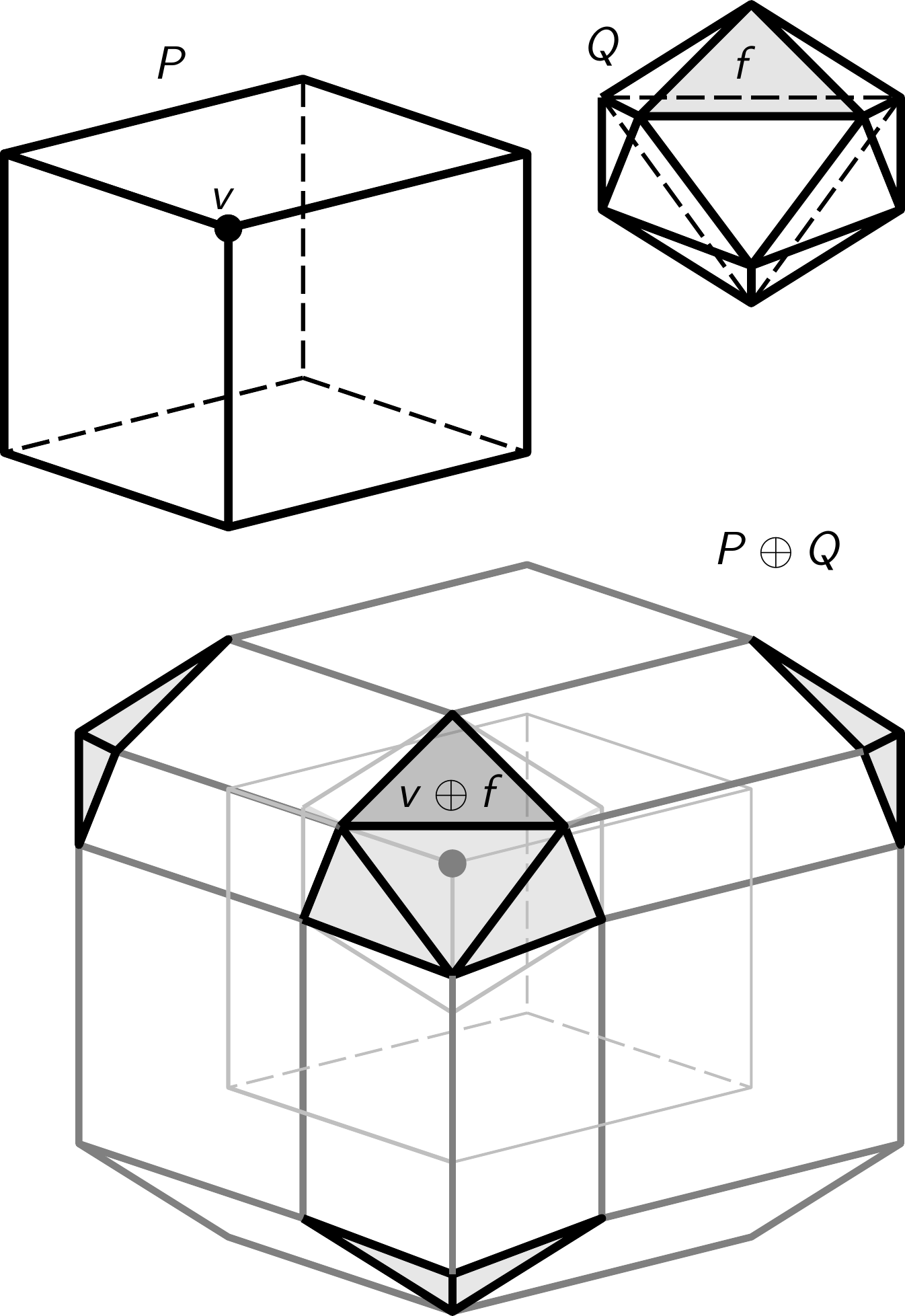}%
		\caption{Stage 1 of the algorithm in $\Re^3$.%
		}%
		\label{fig:stage1}%
		\vspace*{0ex}%
	\end{minipage}
	\hfill\quad%
	\begin{minipage}[b]{0.485\textwidth}
		\centering%
		\includegraphics[width=0.9\columnwidth]{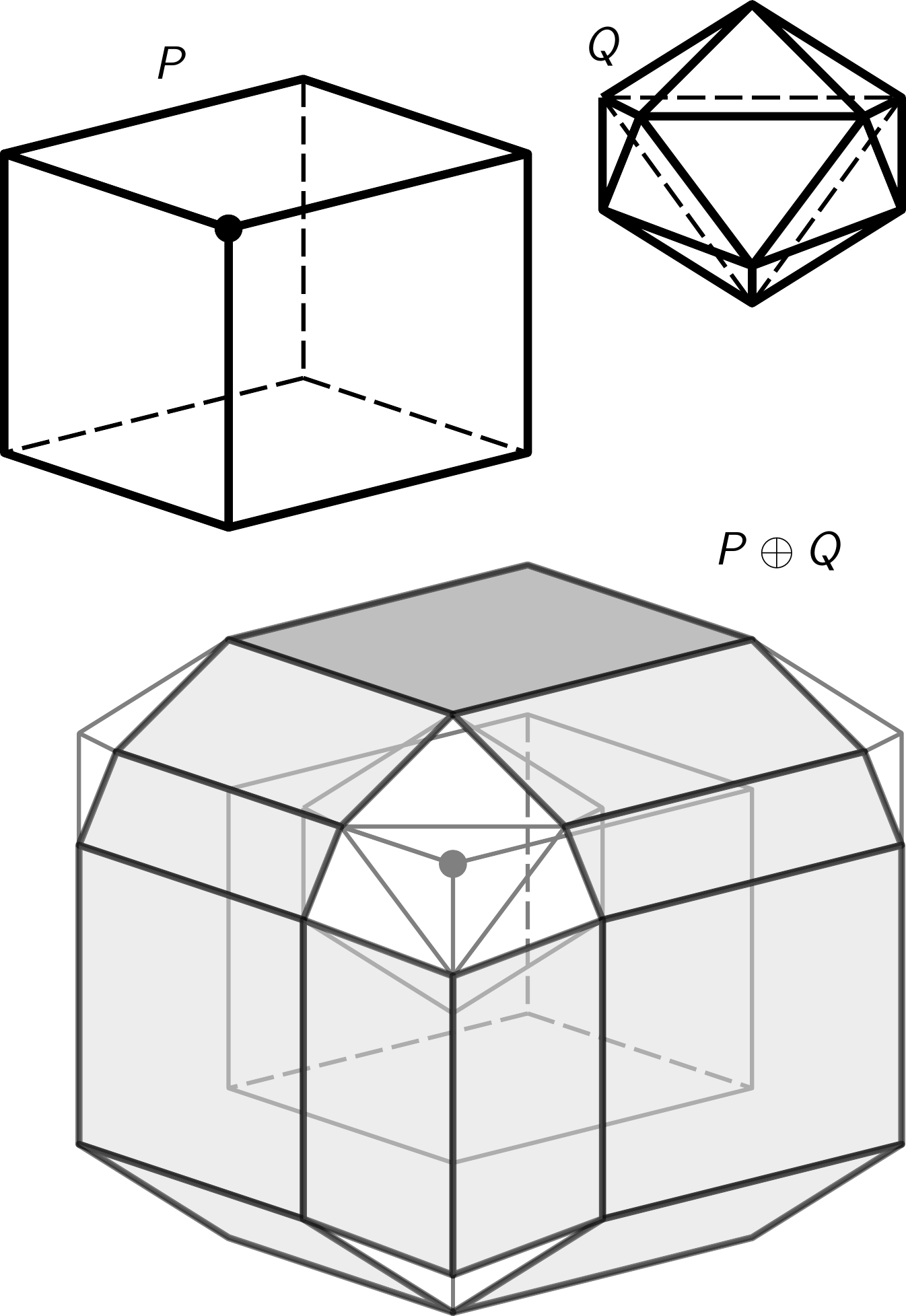}%
		\caption{Stage 2 of the algorithm in $\Re^3$.}%
		\label{fig:stage2}%
		\vspace*{0ex}%
	\end{minipage}
\end{figure*}

\subsection{Stage 1: Computing initial facets of $P\oplus Q$:}

In stage 1, we wish to compute an initial subset of the  $(d-1)$-faces of $P \oplus Q$ of particular kind mentioned below. We use the following lemma for that.

\begin{lemma}\label{lem:tangent}
If parallel tangent hyperplanes $h$ and $h'$  touch convex polytopes $P$ and $Q$, respectively, on the same side  at  faces $f$ and $g$, respectively, not necessarily of same dimensions, then $f \oplus g$ is a face of $P \oplus Q$.
\end{lemma}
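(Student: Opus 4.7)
The plan is to produce an explicit supporting hyperplane of $P \oplus Q$ whose intersection with $P \oplus Q$ equals $f \oplus g$, since by the definition of face given earlier in Section~\ref{sec:defn}, the intersection of a tangent (supporting) hyperplane with a convex polytope is a face. The argument is a straightforward linearity-of-the-support-function calculation, so I do not expect any real obstacle; the main thing to be careful about is the ``same side'' hypothesis, which is exactly what lets the two maxima add.

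First I would pick a common unit normal $\mathbf{n}$ to the parallel hyperplanes $h$ and $h'$, oriented so that $P$ lies in the closed halfspace $\{x : \mathbf{n}\cdot x \le c\}$ and $Q$ lies in $\{y : \mathbf{n}\cdot y \le c'\}$, where $h = \{x : \mathbf{n}\cdot x = c\}$ and $h' = \{y : \mathbf{n}\cdot y = c'\}$. Using that $h$ is tangent to $P$ at $f$ and $h'$ is tangent to $Q$ at $g$, I would identify $f = \{p \in P : \mathbf{n}\cdot p = c\}$ as the set of maximizers of $\mathbf{n}\cdot x$ on $P$, and similarly $g = \{q \in Q : \mathbf{n}\cdot q = c'\}$.

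Next I would introduce the candidate hyperplane $h'' = \{z : \mathbf{n}\cdot z = c + c'\}$. For any point $p + q \in P \oplus Q$, linearity of the inner product gives $\mathbf{n}\cdot(p+q) = \mathbf{n}\cdot p + \mathbf{n}\cdot q \le c + c'$, so $P \oplus Q$ lies on one side of $h''$ and $h''$ is a tangent hyperplane of $P \oplus Q$. Equality $\mathbf{n}\cdot(p+q) = c+c'$ forces both $\mathbf{n}\cdot p = c$ and $\mathbf{n}\cdot q = c'$, because each summand is individually bounded above by its own constant; this is precisely the place where ``same side'' is used to ensure both inequalities point the same way so the bound is tight only when each is tight. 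Hence $p \in f$ and $q \in g$, and conversely any $p + q$ with $p \in f$, $q \in g$ saturates the bound, giving $h'' \cap (P \oplus Q) = f \oplus g$.

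Finally I would conclude that $f \oplus g$, being the intersection of a tangent hyperplane with the convex polytope $P \oplus Q$, is a face of $P \oplus Q$ by the definition of face in Section~\ref{sec:defn}. The dimension claim implicit in the later non-degeneracy assumption (that $\dim(f \oplus g) = \dim f + \dim g$) is not needed here and will be invoked separately; this lemma only asserts that $f \oplus g$ is \emph{some} face of $P \oplus Q$.
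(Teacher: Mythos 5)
Your proof is correct and follows essentially the same route as the paper: the hyperplane $h''=\{z:\mathbf{n}\cdot z=c+c'\}$ you construct is precisely $h\oplus h'$, the tangent hyperplane to $P\oplus Q$ that the paper's proof uses. If anything, your finishing step is tighter: where the paper argues that $h\oplus h'$ touches $P\oplus Q$ only at $f\oplus g$ via an informal infinitesimal-rotation argument, you obtain $h''\cap(P\oplus Q)=f\oplus g$ directly from the observation that equality in $\mathbf{n}\cdot(p+q)\le c+c'$ forces each summand to attain its own maximum.
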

\begin{proof}
	Consider $h \oplus h'$ which will be tangent hyperplane of $P\oplus Q$ on the same side as $h$ and $h'$ on $P$ and $Q$ respectively. It will be tangent because $p \oplus q$, for every $p\in P$, $q\in Q$, will be on the same side. Hence $f\oplus g$ will be on the boundary of $P \oplus Q$. It can also be seen that if $h$ and $h'$ are rotated infinitesimally such that they do not touch $P$ and $Q$, respectively, anywhere else other than $f$ and $g$, respectively, then $h\oplus h'$ also does not touch $P\oplus Q$ anywhere else other than $f\oplus g$. See Figure~\ref{fig:lemmatangent}. This is true because of convexity of $P$.\qed
\end{proof}

We note that this lemma is also used elsewhere in the paper.

We  compute initial facets of polytope $P \oplus Q$ with the help of lemma above as follows. For every facet $f$ in $Q_{d-1}$, we compute the vertex $v$ in $P_0$ such that the hyperplane $h$ parallel to facet $f$ of $Q$ touches $P$ at $v$ and $P$ is on the same side of $h$ as $Q$ is of $f$. By Lemma~\ref{lem:tangent} and taking $h'$ to be hyperplane supporting $f$, we conclude that $v \oplus f$ is a $(d-1)$-face of $P \oplus Q$. 

The above computation can be done in $O(n)$ as follows. First we compute the list of vertices $P_0$ of polytope $P$ from the face lattice. This needs to be done only once for every facet $f$ of $Q$. Next, for every facet $f \in Q_{d-1}$, we compute the outward normal $\hat n$. We need to select the vertex $v$ of $P$ such that $P$ is on the same side of parallel hyperplane $h$ touching $v$ with direction $\hat n$. This can be done  by  computing $\hat n \cdot u$ for every vertex $u \in P_0$ and selecting the vertex $u$ for which this is maximum. This vertex is the required vertex $v$. We add facet $v \oplus f$ for facet $f$ and repeat.

Both $P_0$ and $Q_{d-1}$ can be scanned by any  traversal that takes linear time of two layers of $P$ (0-layer and 1-layer) and $Q$ ($(d-1)$-layer and $(d-2)$-layer). See Figure~\ref{fig:stage1}.
The complexity of stage 1 is given by the following lemma.

\begin{lemma}
	\label{lem:stage1}
	Stage 1 of the algorithm can be completed in $O(nm)$ time.
\end{lemma}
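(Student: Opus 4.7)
The plan is to bound the work done by each sub-step of Stage 1 and sum the costs, relying on the fact that the sizes of the relevant layers are upper bounded by $n$ and $m$.

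First, I would extract the vertex set $P_0$ and the facet set $Q_{d-1}$ from the respective face lattices. Since the face lattice is a layered graph and each node carries its layer label (or can be discovered by a breadth-first traversal from the null face), enumerating the 0-layer of $P$ costs $O(n)$ and enumerating the $(d-1)$-layer of $Q$ costs $O(m)$. This one-time preprocessing contributes $O(n+m)$.

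Next, for every facet $f \in Q_{d-1}$, I would compute the outward unit normal $\hat{n}$ from the affine hull of $f$; in fixed dimension $d$ this is an $O(1)$ operation per facet, either by reading stored geometric data at the face node or by solving a constant-size linear system using any $d$ affinely independent points drawn from $f$ together with the stored interior point of $Q$ to orient the normal outward. Summed over all facets, the cost is $O(m)$. I would then, for each such $f$, scan the list $P_0$ once and pick the vertex $u$ maximizing $\hat{n}\cdot u$; by the non-degeneracy assumption the maximum is attained at a unique vertex $v \in P_0$, which is the tangent vertex guaranteed by Lemma~\ref{lem:tangent}. The facet $v \oplus f$ is then emitted and linked into the output structure in $O(1)$ time. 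Each such scan runs in $O(|P_0|) = O(n)$ time, so the aggregate cost across the $O(m)$ facets of $Q$ is $O(nm)$.

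Adding the preprocessing term gives $O(n+m) + O(nm) = O(nm)$, which matches the claimed bound. The main obstacle I foresee is purely conceptual rather than computational: one has to verify that the non-degeneracy assumption indeed forces the maximizer of $\hat{n}\cdot u$ to be a unique vertex (so that we never accidentally miss a higher-dimensional tangent face and take a wrong $v$), and that the face lattice representation permits $O(1)$ retrieval of the data needed to form the product facet $v \oplus f$. Both are straightforward under the hypotheses already laid out in Section~\ref{sec:defn}, so the time bound follows directly.
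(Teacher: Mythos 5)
Your argument is correct and matches the paper's own proof: enumerate $P_0$ and $Q_{d-1}$ from the lattices in $O(n+m)$, then for each facet $f\in Q_{d-1}$ do a linear scan of $P_0$ maximizing $\hat n\cdot u$, giving $O(|P_0|\cdot|Q_{d-1}|)=O(nm)$ overall. Your additional remarks on computing the normal and on uniqueness of the maximizer under non-degeneracy are fine elaborations of the same approach, not a different route.
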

\begin{proof}
	We can compute the list of vertices in $P$ and list of facets in $Q$ in time $O(|P_0|+|Q_{d-1}|)$ (these are just incident of $(-1)$-face of $P$ and $d$-face of $Q$).	
	Then for every facet $f \in Q_{d-1}$ we can compute the vertex $v \in P_0$ such that $v\oplus f$ is a facet of $P\oplus Q$ in time $O(|P_0|\cdot|Q_{d-1}|)$. Thus total time is $O(nm)$.\qed
\end{proof}

As mentioned earlier, we can optionally switch the role of $P$ and $Q$ and compute the faces of type $g\oplus u$ of the Minkowski sum $P\oplus Q$, where $g\in P_{d-1}$ and $u\in Q_0$.

\subsection{Stage 2: Computing rest of the facets of $P\oplus Q$:}

In stage 2, we compute all the remaining facets of $P\oplus Q$. These are the facets of type $f \oplus g$, where $f$'s are non-vertex faces of the convex polytope $P$ and $g$'s are non-facet faces of the convex polytope $Q$.

We traverse the face lattice structure layer by layer, in this stage, starting from $0$-layer of $P$ and moving to higher layers, and from $(d-1)$-layer of $Q$ and moving to lower layers. Assume we have completed processing of $(k-1)$-layer of $P$, i.e., $P_{k-1}$, and correspondingly $(d-k)$-layer of $Q$, i.e., $Q_{d-k}$. We wish to compute facets of the type $f\oplus g$ where $f \in P_{k}$ and $g \in Q_{d-k-1}$. We compute these as follows.

For every face $f \in P_{k}$ and $g \in Q_{d-k-1}$ we compute the hyperplane $h$ supporting $f\oplus g$. Due to assumption of non-degeneracy the dimension of $f\oplus g$ will be $d-1$. Next we compute hyperplanes $h'$ and $h''$ parallel to $h$ and passing through $f$ and $g$ respectively. We need to check if $h'$ is tangent hyperplanes to $P$ with $P$ on the same side as $h$; and also if $h''$ is tangent hyperplanes to $Q$ with $Q$ on the same side  as $h$. We do this by using the interior point stored in the augmented face lattice structure of the immediate superfaces of $f$ and $g$ respectively. If all the interior points of the immediate superfaces are on the same side as origin (note that we have assumed that the origin lies inside $P$ and $Q$, otherwise, we need to take the interior point of $P_d$ and $Q_d$ respectively), then $f \oplus g$ is a facet of $P\oplus Q$.

Since $P$ and $Q$ are convex polytopes if all the interior points of immediate superfaces of $f \oplus g$ are on the same side of the supporting hyperplane, then whole of $P$ or $Q$ is on the same side. Thus the supporting hyperplane will be a tangent, the conditions of Lemma~\ref{fig:lemmatangent} apply and $f \oplus g$ will be a facet of $P \oplus Q$.

\begin{lemma}
	\label{lem:stage2}
		Stage 2 of the algorithm can be completed in $O(nm)$ time.
\end{lemma}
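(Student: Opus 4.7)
The plan is to bound the running time of Stage 2 by a pair-counting argument that charges each elementary operation to either a face node or an incidence in the face lattices of $P$ or $Q$.

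First I would enumerate the candidate pairs that the stage actually inspects. Stage 2 iterates, for each $k$ with $1 \le k \le d-2$, over all pairs $(f,g) \in P_k \times Q_{d-1-k}$. Since the face lattice sizes satisfy $\sum_k |P_k| \le n$ and $\sum_k |Q_{d-1-k}| \le m$, the total number of such pairs is
\[
\sum_{k=1}^{d-2} |P_k|\,|Q_{d-1-k}| \;\le\; \Bigl(\sum_k |P_k|\Bigr)\Bigl(\sum_k |Q_{d-1-k}|\Bigr) \;=\; O(nm),
\]
treating $d$ as a constant as the paper implicitly does.

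Next I would account for the per-pair cost. For a fixed pair $(f,g)$, the stage first constructs the hyperplane $h$ spanned by $f \oplus g$. Because each augmented face node stores a single relative interior point, and $f$ has $O(1)$ immediate subfaces in fixed dimension, the affine hull of $f \oplus g$ and hence $h$ is computed in $O(1)$ time. The tangency test then inspects the stored interior points of the immediate superfaces of $f$ in $P$ and of $g$ in $Q$ and checks that they lie on the same side of $h$ as the origin; this costs time proportional to $\mathrm{sup}_P(f) + \mathrm{sup}_Q(g)$, where $\mathrm{sup}_P(f)$ denotes the number of immediate superfaces of $f$ in the lattice of $P$.

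Then I would aggregate. The dominating term is
\[
\sum_{k=1}^{d-2}\ \sum_{f \in P_k}\ \sum_{g \in Q_{d-1-k}} \bigl(\mathrm{sup}_P(f) + \mathrm{sup}_Q(g)\bigr).
\]
The first summand is $\sum_k |Q_{d-1-k}| \cdot \sum_{f \in P_k} \mathrm{sup}_P(f) \le m \cdot \sum_k |I_{P,k}| = O(nm)$, because the total number of incidences in the face lattice of $P$ is bounded by its size $n$. The second summand is bounded symmetrically by $O(nm)$. Adding the $O(1)$ per-pair hyperplane construction over $O(nm)$ pairs gives the claimed $O(nm)$ total.

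The main obstacle I expect is the bookkeeping: verifying that counting immediate superfaces layer by layer does not double count, and justifying that constructing $h$ and evaluating the sidedness predicate can genuinely be done in $O(1)$ from the data a face node is assumed to carry (a single strict relative interior point, plus pointers to immediate sub- and superfaces). Both hinge on the non-degeneracy assumption and on $d$ being treated as constant, after which the analysis is a clean double summation over incidences.
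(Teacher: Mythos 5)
Your proposal is correct and takes essentially the same route as the paper: enumerate the $O(nm)$ candidate pairs $(f,g)$ layer by layer and charge the sidedness tests against interior points of immediate superfaces to the incidences of the face lattices, whose totals are bounded by $n$ and $m$. One local inaccuracy: your justification that the supporting hyperplane of $f\oplus g$ is computable in $O(1)$ because a face has $O(1)$ immediate subfaces in fixed dimension is false (a $2$-face in $\Re^3$ may have arbitrarily many edges); the fix is to obtain an affine basis for each face from a single chain of subfaces using the stored interior points (only $O(d)$ points, precomputable once per face), or to observe that even a full subface scan would be absorbed by the same incidence-charging bound, so the $O(nm)$ total stands.
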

\begin{proof}
	For each face $g$ in $Q_{d-k-1}$, the time taken is $O(|I_{P,k}|)$ for all the faces $f \in P_k$. Thus total time taken will be $O(|I_{P,k}|\cdot |I_{Q,d-1-k}|)=O(nm)$ for all the faces of $Q_{d-1-k}$ that we consider in stage 2 of the algorithm. We scan every incidence in face lattice of $P$ and  $Q$ at most once for each pair of faces $f\in P$ and $g\in Q$.\qed
\end{proof}

We note, that the method mentioned in the stage 2 is equally applicable for computing of the facets that were computed in stage 1 with a higher time complexity.

\begin{figure*}[t]
	\begin{minipage}[b]{0.485\textwidth}
		\centering%
		\includegraphics[width=0.9\columnwidth]{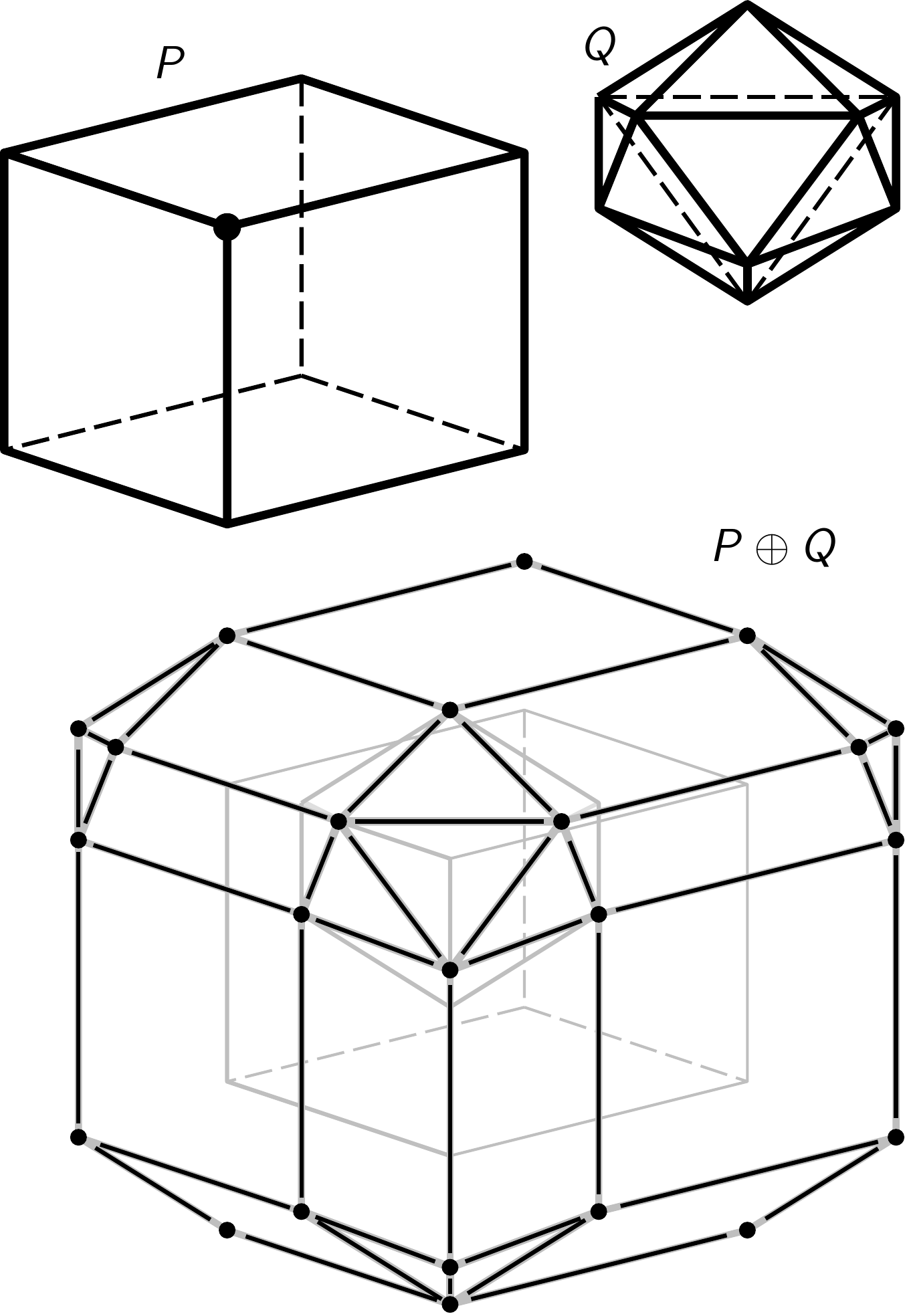}%
		\caption{Stage 3 of the algorithm in $\Re^3$.%
		}%
		\label{fig:stage3}%
		\vspace*{0ex}%
	\end{minipage}
	\hfill\quad%
	\begin{minipage}[b]{0.485\textwidth}
		\centering%
		\includegraphics[width=0.9\columnwidth]{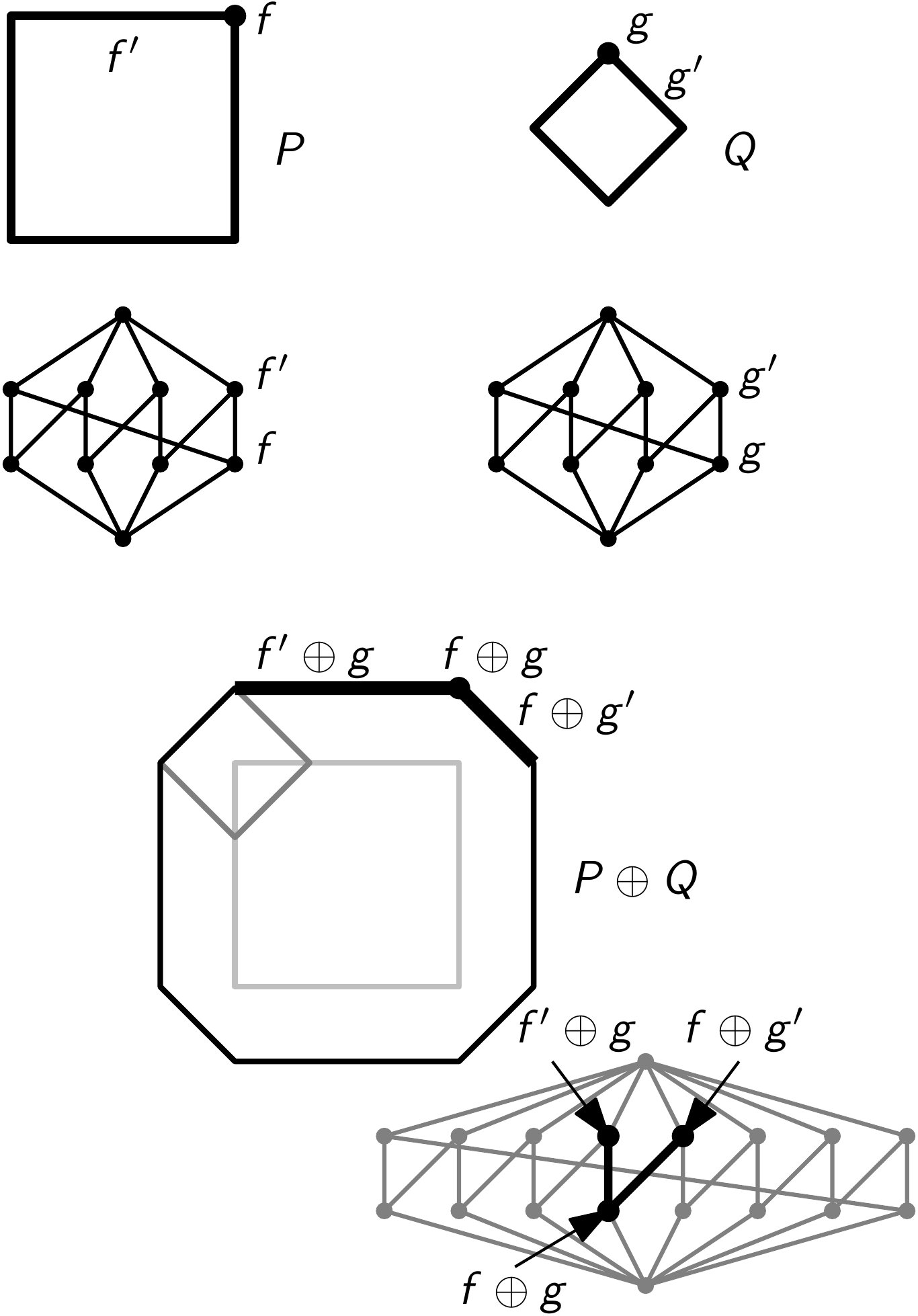}%
		\caption{Face lattice of $P\oplus Q$ in the algorithm in $\Re^2$.}%
		\label{fig:final}%
		\vspace*{0ex}%
	\end{minipage}
\end{figure*}

\subsection{Stage 3: Computing all the non-facet subfaces of $P\oplus Q$:}

In stage 3, we compute all the non-facet faces of $P\oplus Q$ and their correct incidences. We use the following lemmas and corollary for this.

\begin{lemma}
\label{lem:subface}
Let $P$ and $Q$ be convex polytopes.
For any $f,f' \in P$ and $g \in Q$, such that $f'$ is a  subface of $f$ in $P$, if $f \oplus g$ is a $k$-face of $P \oplus Q$, $0 \leq k \leq d-1$ then $f'\oplus g$ is a subface of $f\oplus g$ in $P \oplus Q$.
\end{lemma}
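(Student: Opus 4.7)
The plan is to reduce to Lemma~\ref{lem:tangent}: to conclude that $f' \oplus g$ is a face of $P \oplus Q$, it suffices to exhibit parallel tangent hyperplanes to $P$ and $Q$, on the same side, that touch $P$ exactly at $f'$ and $Q$ exactly at $g$. The containment $f' \oplus g \subseteq f \oplus g$ is immediate from $f' \subseteq f$, so only the face property needs real work.

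First I would use the hypothesis that $f \oplus g$ is a face of $P \oplus Q$ to extract a common outward normal direction $\hat n$ whose parallel tangent hyperplanes touch $P$ exactly at $f$ and $Q$ exactly at $g$; this is essentially the definition of a face as the intersection of the polytope with a supporting hyperplane, combined with Lemma~\ref{lem:tangent} which tells us such tangents decompose. Separately, since $f'$ is itself a face of $P$, there is some direction $\hat n^*$ whose tangent hyperplane to $P$ touches exactly at $f'$. The key construction is the perturbed normal $\hat n_1 = (1 - \epsilon)\hat n + \epsilon\hat n^*$ for a sufficiently small $\epsilon > 0$. On the $P$ side, because $f' \subseteq f$, both $\hat n$ and $\hat n^*$ are normals of supporting hyperplanes whose intersection with $P$ contains $f'$, and a standard convex-analysis fact implies that the strict convex combination of a ``boundary'' normal (whose tangent touches the larger face $f$) with a ``relative-interior'' normal (whose tangent touches exactly $f'$) again has its tangent touching exactly at $f'$. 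On the $Q$ side, the property that the tangent with a given normal touches $Q$ at exactly $g$ is an open condition on the normal, so it persists under small perturbations; for $\epsilon$ small enough the tangent to $Q$ with normal $\hat n_1$ still touches exactly at $g$. Applying Lemma~\ref{lem:tangent} to this pair of parallel tangents then yields that $f' \oplus g$ is a face of $P \oplus Q$, completing the argument.

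The main obstacle is the convex-analysis step on the $P$ side: justifying that interpolating from $\hat n$ (whose tangent picks out $f \supseteq f'$, strictly larger than $f'$ in general) toward $\hat n^*$ (whose tangent picks out exactly $f'$) produces a direction whose tangent ``shrinks down'' to precisely $f'$ rather than to some intermediate face. This is the standard fact that the open segment joining a boundary point to a relative-interior point of a convex set lies in the relative interior, applied here to the cone of normals of hyperplanes tangent to $P$ at a face containing $f'$. Everything else is either immediate set-theoretic inclusion or a direct invocation of Lemma~\ref{lem:tangent}.
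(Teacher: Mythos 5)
There is a genuine gap on the $Q$ side of your perturbation argument. The set of normals $u$ for which the tangent hyperplane to $Q$ touches $Q$ \emph{exactly} at $g$ is the relative interior of the normal cone of $g$; when $g$ has positive dimension this cone has dimension $d-\dim g<d$, so ``touching exactly at $g$'' is \emph{not} an open condition on the normal. Concretely, any admissible normal must be constant on $g$, i.e.\ orthogonal to the affine hull direction of $g$. Your $\hat n$ satisfies this, but $\hat n^*$ was chosen only with reference to $P$ and $f'$, so in general there is a direction $z$ parallel to $g$ with $\hat n^*\cdot z\neq 0$; then for \emph{every} $\epsilon>0$ the perturbed normal $\hat n_1=(1-\epsilon)\hat n+\epsilon\hat n^*$ is non-constant on $g$, and the tangent to $Q$ with normal $\hat n_1$ touches $Q$ at a proper subface of $g$, not at $g$. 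Applying Lemma~\ref{lem:tangent} to $\hat n_1$ then certifies the wrong face. Your argument is sound only in the special case where $g$ is a vertex (there the normal cone is full-dimensional and the openness claim is true); the $P$-side step, via the segment-into-relative-interior fact for the normal cone of $f'$, is fine.

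The missing ingredient is the paper's non-degeneracy assumption, which your proof never invokes; without it the statement itself is false (take $P=[0,1]^2$, $Q$ a triangle with a bottom edge parallel to the bottom edge $f$ of $P$, $g$ that bottom edge of $Q$, and $f'$ a vertex of $f$: then $f\oplus g$ is a facet of $P\oplus Q$ but $f'\oplus g$ is only a sub-segment of it, not a face). The repair is to perturb $\hat n$ by a vector $w$ chosen orthogonal to the affine hull direction of $g$ and exposing $f'$ within $f$, so that for small $\epsilon$ one gets $F(P,\hat n+\epsilon w)=f'$ and $F(Q,\hat n+\epsilon w)=g$; the existence of such a $w$ is exactly what the assumption $\dim(f\oplus g)=\dim f+\dim g$ (trivial intersection of the direction spaces of $f$ and $g$) guarantees. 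For comparison, the paper offers no argument at all here—its proof is the one-line assertion that the claim follows from the definition of Minkowski sum—so your tangent-perturbation route is a reasonable way to make the statement rigorous, but it must be threaded through the degeneracy hypothesis as above.
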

\begin{proof}
Proof follows from the definition of Minkowski sum.\qed
\end{proof}
\begin{corollary}
For any $f \in P$ and $g,g' \in Q$, such that $g'$ is a subface of $g$ in $Q$, if $f \oplus g$ is a face of $P \oplus Q$ then $f\oplus g'$ is a  subface of $f\oplus g$ in $P \oplus Q$.
\end{corollary}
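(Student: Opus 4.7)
The plan is to deduce the corollary directly from Lemma~\ref{lem:subface} by invoking the commutativity of the Minkowski sum. The paper already notes (just after the definition of $\oplus$) that the Minkowski sum operator is commutative and associative, so $A \oplus B = B \oplus A$ for any polytopes $A$ and $B$, and the same identity holds for individual faces viewed as polytopes.

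Given this, I would proceed as follows. First, swap the roles of $P$ and $Q$ and consider the polytope $Q \oplus P$ instead of $P \oplus Q$; by commutativity these are the same polytope, and their face lattices agree under the identification $f \oplus g = g \oplus f$. The hypotheses of the corollary translate cleanly under this swap: $g'$ is a subface of $g$ in $Q$ (so $g'$ plays the role of $f'$ and $g$ the role of $f$ in the lemma), $f$ is a face of $P$ (playing the role of $g$), and the assumption that $f \oplus g$ is a face of $P \oplus Q$ becomes the assumption that $g \oplus f$ is a face of $Q \oplus P$. Lemma~\ref{lem:subface} then applies verbatim and yields that $g' \oplus f$ is a subface of $g \oplus f$ in $Q \oplus P$.

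Finally, using commutativity in the other direction, rewrite $g' \oplus f = f \oplus g'$ and $g \oplus f = f \oplus g$, and identify $Q \oplus P$ with $P \oplus Q$; this gives exactly the corollary's conclusion that $f \oplus g'$ is a subface of $f \oplus g$ in $P \oplus Q$. No new geometric content is needed — the whole proof is a relabeling argument. The only thing worth flagging is that the identification of face lattices under commutativity is being used implicitly, so if one wanted to be fully pedantic one could instead rerun the one-line proof of Lemma~\ref{lem:subface} (that subface/superface relations in a Minkowski sum follow directly from the set-theoretic definition $\{p+q : p \in P,\, q \in Q\}$), which is symmetric in $P$ and $Q$ from the start. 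There is no substantive obstacle.
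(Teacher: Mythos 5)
Your proposal is correct and matches the paper's own argument: the paper also derives the corollary from Lemma~\ref{lem:subface} by appealing to the commutativity of the Minkowski sum, and your write-up simply spells out the role-swapping in more detail.
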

\begin{proof}
	This is because Minkowski sum is commutative.\qed
\end{proof}

The above lemmas provide a sufficient condition for subfaces. Below we provide a necessary condition.

\begin{lemma}
\label{lem:superface}
Let $P$ and $Q$ be convex polytopes.
For any $f\in P$ and $g \in Q$, if $f \oplus g$ is a $k$-face of $P \oplus Q$, $0\leq k < d-1$, then there exists a superface $f'$ of $f$ in $P$ or a superface $g'$ of $g$ in $Q$ such that  $f'\oplus g$ or  $f \oplus g'$ is a superface of $f\oplus g$ in $P \oplus Q$ and is of dimension $k+1$.
\end{lemma}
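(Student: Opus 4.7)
The plan is to reduce the statement to a dimension count performed inside the face lattice of $P\oplus Q$, using Lemma~\ref{lem:tangent} as the bridge between faces of $P\oplus Q$ and pairs of faces of $P$ and $Q$. Since $f\oplus g$ is a $k$-face of the convex polytope $P\oplus Q$ with $k<d-1$, it is not a facet; by the standard gradedness of the face lattice of a convex polytope, there exists at least one immediate superface $F$ of $f\oplus g$ in $P\oplus Q$ of dimension $k+1$. The task is thus reduced to showing that any such $F$ decomposes as either $f'\oplus g$ or $f\oplus g'$ for an appropriate $f'\supseteq f$ in $P$ or $g'\supseteq g$ in $Q$.

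To obtain the decomposition, I will pick a tangent hyperplane $h$ to $P\oplus Q$ whose contact set is exactly $F$ (which exists since $F$ is a proper face). Let $\hat n$ be its outward normal, and let $h_P=\{x:\hat n\cdot x=a\}$ and $h_Q=\{x:\hat n\cdot x=b\}$ be the parallel tangent hyperplanes to $P$ and $Q$ with the same outward normal, meeting $P$ at a face $\phi$ and $Q$ at a face $\psi$. Then $P\subseteq\{\hat n\cdot x\le a\}$, $Q\subseteq\{\hat n\cdot x\le b\}$, and $P\oplus Q\subseteq\{\hat n\cdot x\le a+b\}$, with $\hat n\cdot(p+q)=a+b$ forcing $\hat n\cdot p=a$ and $\hat n\cdot q=b$. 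This gives $(P\oplus Q)\cap h=\phi\oplus\psi$, so $F=\phi\oplus\psi$. The same equality argument applied to points of $f\oplus g\subseteq h$ shows that every $p\in f$ lies on $h_P$ and every $q\in g$ lies on $h_Q$, so $f\subseteq\phi$ and $g\subseteq\psi$.

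A dimension count now closes the argument. By the non-degeneracy assumption, $\dim\phi+\dim\psi=\dim F=k+1$ and $\dim f+\dim g=k$, while $\dim f\le\dim\phi$ and $\dim g\le\dim\psi$. The only possibilities are $(\dim\phi,\dim\psi)=(\dim f+1,\dim g)$, in which case $\psi=g$ and $f':=\phi$ is a superface of $f$ in $P$ with $f'\oplus g=F$, or $(\dim\phi,\dim\psi)=(\dim f,\dim g+1)$, in which case $\phi=f$ and $g':=\psi$ is a superface of $g$ in $Q$ with $f\oplus g'=F$. The main obstacle I anticipate is the identification $(P\oplus Q)\cap h=\phi\oplus\psi$: arguing that every point of $P\oplus Q$ lying on $h$ actually splits as a sum of a point of $\phi$ and a point of $\psi$. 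This rests on the tight-equality step in the linear inequalities above and must be executed carefully; everything else is a clean dimension count driven entirely by the non-degeneracy assumption.
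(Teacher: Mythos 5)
Your argument is correct, and it is in fact more than the paper offers: the paper states Lemma~\ref{lem:superface} without any proof, so your write-up supplies the missing justification rather than mirroring an existing one. Your route is the natural one and stays within the paper's own toolkit: you use the same parallel-tangent-hyperplane picture as Lemma~\ref{lem:tangent}, but in the reverse direction --- starting from a $(k+1)$-dimensional superface $F$ of $f\oplus g$ (which exists by gradedness of the face lattice), choosing a supporting hyperplane $h$ whose contact set is exactly $F$, and decomposing $F=(P\oplus Q)\cap h=\phi\oplus\psi$ via the tight-inequality argument $\hat n\cdot(p+q)=a+b\Rightarrow\hat n\cdot p=a,\ \hat n\cdot q=b$; that step is exactly right and is the standard fact that a face of a Minkowski sum in a given outer normal direction is the sum of the faces of the summands in that direction. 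The concluding dimension count correctly invokes the paper's non-degeneracy assumption (applied both to $(f,g)$ and to $(\phi,\psi)$), and the identification $\psi=g$ (respectively $\phi=f$) is legitimate because a face of $Q$ contained in a face of the same dimension must coincide with it. The only hypotheses you lean on beyond the paper's assumptions are standard polytope facts (gradedness, and the existence of a supporting hyperplane meeting the polytope exactly in a prescribed proper face), both unproblematic here since $k+1\leq d-1$; so the proof is complete, and it also makes explicit why the superface produced is automatically an \emph{immediate} superface, which is what Stage~3 of the algorithm actually uses.
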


Lemma~\ref{lem:subface} and Lemma~\ref{lem:superface} provides us with a method to compute all the lower dimensional subfaces of $P \oplus Q$. This can be done by carefully duplicating the incidence relationships in $P$ and $Q$ for the relevant facets of $P \oplus Q$.  See Figures~\ref{fig:stage3} and~\ref{fig:final}.

\begin{figure*}[t]
	\begin{minipage}[b]{0.485\textwidth}
		\centering%
		\includegraphics[width=0.9\columnwidth]{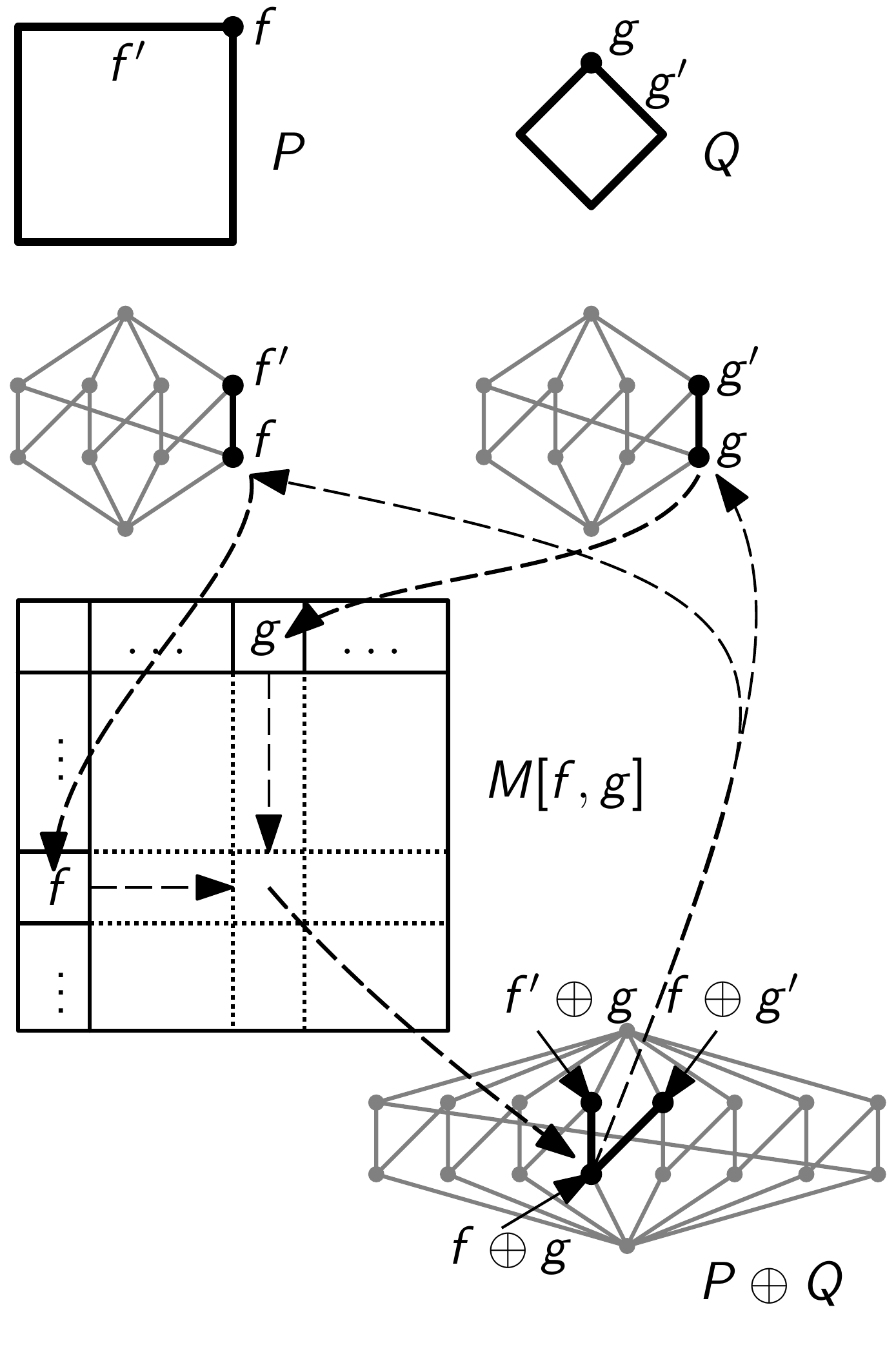}%
		\caption{Creating incidences in Stage 3 of the algorithm using table $M$.%
		}%
		\label{fig:table}%
		\vspace*{0ex}%
	\end{minipage}
	\hfill\quad%
	\begin{minipage}[b]{0.485\textwidth}
		\centering%
		\includegraphics[width=0.9\columnwidth]{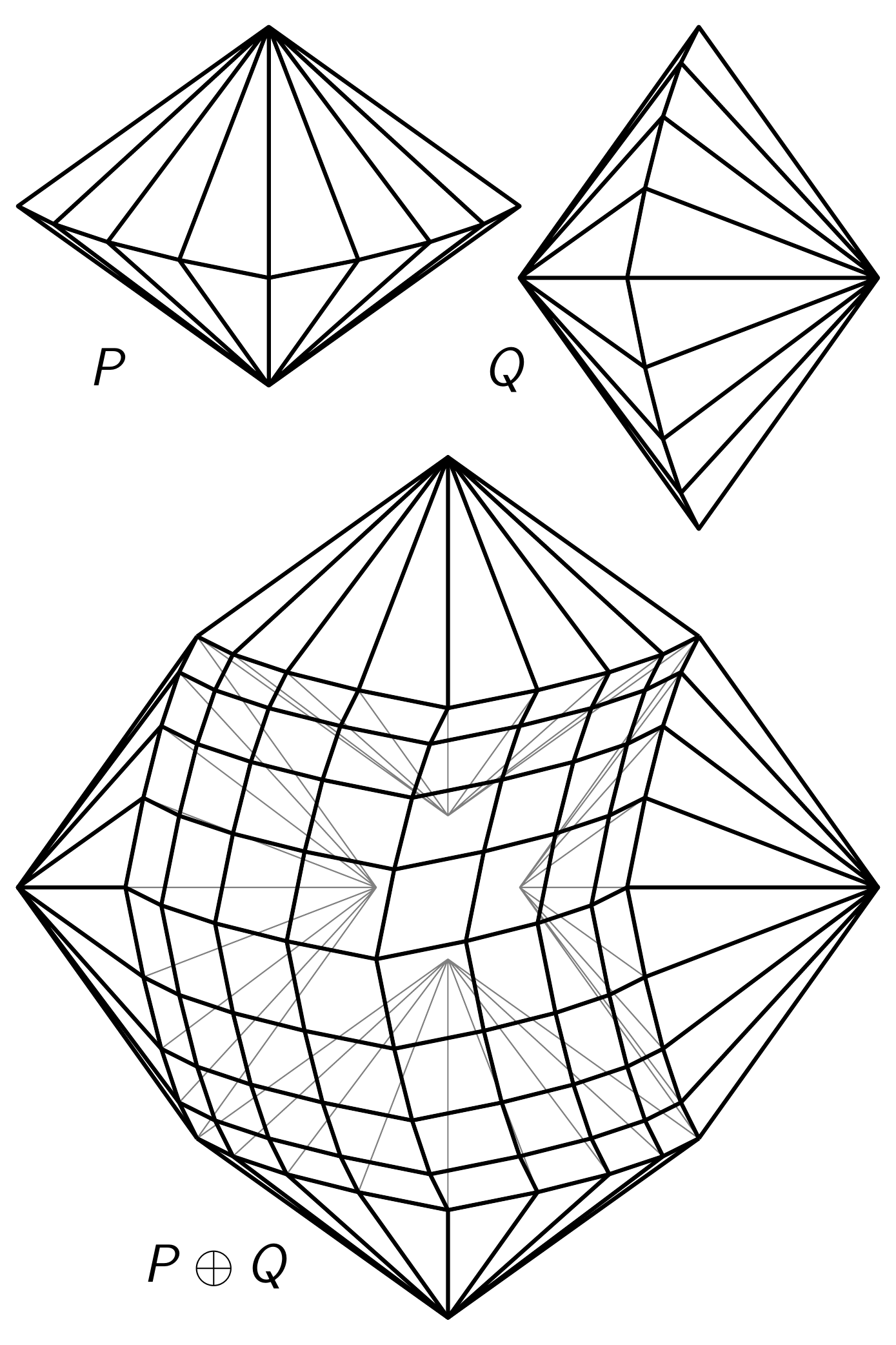}%
		\caption{Worst case: Even the vertices, even in $\Re^3$ can be $O(nm)$.}%
		\label{fig:worst}%
		\vspace*{0ex}%
	\end{minipage}
\end{figure*}

We describe the algorithm to do the above in detail as follows. We give unique indices to each face of $P$ and store the indices in the nodes of each face of $P$ of each dimension 0 to $d-1$. There will be $O(n)$ indices for the faces of $P$. We do the same for faces of $Q$. There will be $O(m)$ indices for the faces of $Q$. 
Next we create a two dimensional table, say $M$, of maximum size $O(nm)$, such that $M[f,g]$ will keep a link to face $f \oplus g$ of convex polytope $P\oplus Q$, if it exists. In the beginning, we initialise the table $M$ with empty links. Then, whenever, we determine if $f \oplus g$ is a face of $P\oplus Q$, for $f\in P$ and $g\in Q$, we keep the link to the node of the face $f \oplus g$ in the face lattice assigned to the cell $M[f,g]$. 
We use table $M$ to create nodes in the face lattice of $P\oplus Q$ only once for each face.

We illustrate the process below by giving an example. Suppose $f' \oplus g$ and $f \oplus g'$ are $(k+1)$-faces of $P\oplus Q$, and $f\oplus g$ is a $k$-face of $P\oplus Q$ such that $f$ is a facet of $f'$, $f,f' \in P$ and $g, g' \in Q$, $g$ is a facet of $g'$. Then first time we encounter 
$f\oplus g$ from $f' \oplus g$, we find the table entry $M[f,g]$ empty, we create the node for $f\oplus g$, link it with table entry, and assign the incidence in the output face lattice for $f' \oplus g$. Next time, when we encounter $f\oplus g$ from $f \oplus g'$, we already have a link in the table entry for $M[f,g]$ to the node for $f\oplus g$. We use this link to create  the correct incidences in the output face lattice for $f \oplus g'$ and stop further subface processing, as we have already processed the subfaces of $f\oplus g$ earlier. See Figure~\ref{fig:table}.

Thus we have the following lemma.

\begin{lemma}
	\label{lem:stage3}
Stage 3 of the algorithm can be completed in $O(nm)$ time.
\end{lemma}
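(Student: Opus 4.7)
The plan is to bound the work of Stage 3 in three parts: initialization of the lookup table $M$, the creation of nodes for each face of $P \oplus Q$, and the installation of incidence arcs in the output face lattice. By Lemma~\ref{lem:subface} and its corollary, every face of $P \oplus Q$ has the form $f \oplus g$ for some $f \in P$ and $g \in Q$, so $M$ is indexed by pairs $(f,g)$, has at most $O(nm)$ cells, and can be allocated and initialized with empty links in $O(nm)$ time.

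Next I would charge the per-pair work. When the algorithm first writes a link into $M[f,g]$ for a face $f \oplus g$ of $P \oplus Q$, Lemma~\ref{lem:subface} identifies its immediate subfaces as exactly the sums $f' \oplus g$ for $f'$ an immediate subface of $f$ in $P$, together with $f \oplus g'$ for $g'$ an immediate subface of $g$ in $Q$. The algorithm enumerates this list, does an $O(1)$ lookup in $M$ for each candidate, and either installs a new node (and recurses) or installs an incidence arc to the already-existing node (and halts). So the work attributable to the pair $(f,g)$ is $O(\deg_P(f) + \deg_Q(g))$, where $\deg_P(f)$ counts the immediate subfaces of $f$ in the face lattice of $P$, and similarly for $\deg_Q(g)$. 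Summing over all candidate pairs,
\[
\sum_{f \in P,\, g \in Q}\bigl(\deg_P(f) + \deg_Q(g)\bigr) \;=\; |Q|\sum_{f \in P}\deg_P(f) \;+\; |P|\sum_{g \in Q}\deg_Q(g) \;=\; O(m\!\cdot\! n) + O(n\!\cdot\! m) \;=\; O(nm),
\]
using that the total number of subface incidences in the face lattice of $P$ is $O(n)$ (it is part of the input of size $n$), and similarly that the total for $Q$ is $O(m)$.

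The main obstacle I anticipate is ensuring that no face of $P \oplus Q$ has its subfaces re-enumerated more than once. A single face $f \oplus g$ is reachable as a subface from several distinct superfaces of the form $f' \oplus g$ or $f \oplus g'$, so a naive top-down descent from the facets produced in Stages 1 and 2 would revisit $f \oplus g$ many times and destroy the $O(nm)$ budget. The table $M$ supplies the $O(1)$ duplicate-detection that keeps the charge above honest: the very first visit to $(f,g)$ creates the node and recurses into its subfaces, while every later visit merely installs the incidence arc from the current superface and stops. Finally, Lemma~\ref{lem:superface}, read from facets downward, guarantees that this top-down descent actually reaches every face of $P \oplus Q$, so no face is missed in the output lattice. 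Combining the $O(nm)$ initialization cost with the $O(nm)$ amortized descent cost gives the claimed bound.
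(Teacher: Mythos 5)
Your proposal is correct and follows essentially the same route as the paper: the table $M$ of size $O(nm)$ serves as the memoization structure, nodes are created only on the first visit to a pair $(f,g)$, and later visits merely add an incidence arc and stop, with the top-down descent from the Stage 1--2 facets covering all faces. You spell out the amortized charging of $O(\deg_P(f)+\deg_Q(g))$ per pair explicitly, which the paper leaves implicit in its ``discussion above,'' but the underlying argument is the same.
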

\begin{proof}
The proof follows from the discussion above and the size of the table $M$ is maximum $O(nm)$.\qed
\end{proof}

Combining the time complexities of the three stages, we have the following theorem.
\begin{theorem}
	The Minkowski sum of two convex polytopes $P$ and $Q$  in $\Re^d$ of face-lattice structure sizes $n$ and $m$, respectively, can be computed in $O(nm)$ time.
\end{theorem}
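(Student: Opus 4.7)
The plan is to assemble the theorem directly from the three stage lemmas already proved. Correctness splits into two pieces: (a) Stages~1 and~2 together enumerate exactly the facets of $P\oplus Q$, and (b) Stage~3 completes the rest of the face lattice with the correct incidences. Once correctness is in hand, the runtime is simply the sum of the three stage bounds.

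For (a) I would invoke the non-degeneracy assumption, which forces every facet of $P\oplus Q$ to be a sum $f\oplus g$ with $\dim f + \dim g = d-1$. Stage~1 enumerates the pairs with $f\in P_0$ and $g\in Q_{d-1}$ (and optionally the symmetric family $f\in P_{d-1}$, $g\in Q_0$), while Stage~2 enumerates the pairs with $\dim f = k$ and $\dim g = d-1-k$ for $0<k\le d-1$, in both cases using Lemma~\ref{lem:tangent} together with the interior points stored on immediate superfaces to decide facet-hood. The union of the outputs is exactly the facet set of $P\oplus Q$. For (b) I would use Lemma~\ref{lem:superface}, which guarantees that every $k$-face with $k<d-1$ lies under some $(k+1)$-face of $P\oplus Q$, so a top-down traversal seeded by the Stage~1/2 facets reaches every face. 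Lemma~\ref{lem:subface} and its corollary supply the local descent rule: the immediate subfaces of $f\oplus g$ lie among $\{f'\oplus g\}$ and $\{f\oplus g'\}$ where $f'$, $g'$ are immediate subfaces of $f$, $g$ respectively. The $O(nm)$ table $M[f,g]$ deduplicates: a face encountered from several parents is created only once, and subsequent visits install incidences without recursing again.

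For the running time I would simply add the bounds in Lemma~\ref{lem:stage1}, Lemma~\ref{lem:stage2}, and Lemma~\ref{lem:stage3}, each $O(nm)$, to conclude the total is $O(nm)$. The point that needs the most care, rather than a genuine obstacle, is the Stage~3 bookkeeping: one must check that the ``stop when $M[f,g]$ is already set'' rule prevents duplicate node creation while still installing every required parent/child incidence, so that the final structure is the complete and correct face lattice of $P\oplus Q$. Everything else is a short aggregation of the previously established lemmas.
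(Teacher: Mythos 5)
Your proposal is correct and follows essentially the same route as the paper: the theorem is obtained by combining the three stage lemmas (Lemmas~\ref{lem:stage1},~\ref{lem:stage2} and~\ref{lem:stage3}), with correctness resting on Lemma~\ref{lem:tangent}, the non-degeneracy assumption, and Lemmas~\ref{lem:subface} and~\ref{lem:superface} together with the table-$M$ deduplication. You merely spell out the correctness aggregation more explicitly than the paper's one-line proof does.
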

\begin{proof}
	The proof follows from the Lemmas~\ref{lem:stage1},~\ref{lem:stage2} and~\ref{lem:stage3}.\qed
\end{proof}

\section{Computing the Minkowski Sum of Multiple Convex Polytopes in $\Re^d$}
\label{sec:multiple}

Let $P_1$, $P_2$, $\dots$, $P_n$ be $n$ convex polytopes of face lattice sizes $N_1$, $N_2$, $\dots$, $N_n$, respectively. We can compute the Minkowski sum $P_1 \oplus P_2 \oplus \cdots \oplus P_n$ in worst case time complexity $O(N_1N_2 + N_1N_2N_3+\cdots+N_1N_2\dots N_n) = O(nN_1N_2\dots N_n)$ by the method presented in the Section~\ref{sec:genminkowski}. However, we can do better if we compute the Minkowski sum of $n$ convex polytopes simultaneously.

The algorithm for Minkowski sum of multiple polytopes in $\Re^d$ will constitute of three stages as before.

In stage 1, we fix facets, say $f$, one at a time, of convex polytope $P_i$ and find vertices $v_j$'s of  convex polytopes $P_j$'s, $j \neq i$, such that condition of tangent hyperplanes parallel to hyperplane supporting $f$ in Lemma~\ref{lem:tangent} is satisfied. The face  $v_1 \oplus v_2 \oplus \cdots \oplus v_{i-1} \oplus f \oplus v_{i+1} \oplus \cdots v_n$ will  be a facet of $P_1 \oplus P_2 \oplus \cdots \oplus P_n$.

In stage 2, we take faces, say $f_i$'s $\in P_i$'s, such that total sum of dimensions of $f_i$'s is $d-1$. Then we check if $f_1 \oplus f_2 \oplus \cdots \oplus f_n$ is a facet of $P_1 \oplus P_2 \oplus \cdots \oplus P_n$ by checking if the strict interior points of all the immediate superfaces are on the same side as the origin (which is again assumed in the interior of all polytopes, otherwise we take the interior point of polytopes) for parallel hyperplanes to $f_1 \oplus f_2 \oplus \cdots \oplus f_n$ touching $P_i$'s.

Stage 3 of the algorithm is exactly same due to  the following generalization of Lemmas~\ref{lem:subface} and~\ref{lem:superface}.

\begin{lemma}
Let $\forall i$, $1 \leq i \leq n$, $f_i \in P_i$, and  $g_i$ be a superface of $f_i$. If 
$f_1 \oplus f_2 \oplus \cdots \oplus f_{j-1} \oplus g_j \oplus f_{j+1}\oplus \cdots f_n$ is a face of $P_1 \oplus P_2 \oplus \cdots \oplus P_n$ then  $f_1 \oplus f_2 \oplus \cdots \oplus f_n$ is a subface of $f_1 \oplus f_2 \oplus \cdots \oplus f_{j-1} \oplus g_j \oplus f_{j+1}\oplus \cdots f_n$ in $P_1 \oplus P_2 \oplus \cdots \oplus P_n$. 

Also if  $f_1 \oplus f_2 \oplus \cdots \oplus f_n$ is a $k$-face of $P_1 \oplus P_2 \oplus \cdots \oplus P_n$, $0\leq k < d-1$ then there exists a superface $g_j$ of $f_j$ for some $j$ such that  $f_1 \oplus f_2 \oplus \cdots \oplus f_n$ is a subface of $f_1 \oplus f_2 \oplus \cdots \oplus f_{j-1} \oplus g_j \oplus f_{j+1}\oplus \cdots f_n$ in $P_1 \oplus P_2 \oplus \cdots \oplus P_n$ for that $j$.
\end{lemma}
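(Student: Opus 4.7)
The plan is to reduce both parts to the two-polytope versions already established (Lemma~\ref{lem:subface}, its Corollary, and Lemma~\ref{lem:superface}) by induction on $n$, exploiting the associativity of $\oplus$. The key maneuver is to group $n-1$ of the polytopes into a single ``meta-polytope'' $R = \bigoplus_{i \neq j} P_i$ with corresponding face $\tilde f = \bigoplus_{i \neq j} f_i$, so that the full Minkowski sum factors as $R \oplus P_j$ and the lemmas for two convex polytopes apply directly. The non-degeneracy assumption (which extends naturally to finitely many polytopes) guarantees that dimensions add across $\oplus$, which is what permits us to identify faces of $R$ with products of faces of the $P_i$'s.

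For the first (sufficient) statement, fix the index $j$ and form $R$, $\tilde f$ as above. By the inductive hypothesis (base case $n=2$ being Lemma~\ref{lem:subface}), $\tilde f$ is a face of $R$. Since $f_j$ is a subface of $g_j$ in $P_j$, Lemma~\ref{lem:subface} applied to the pair $R, P_j$ gives that $\tilde f \oplus f_j$ is a subface of $\tilde f \oplus g_j$ in $R \oplus P_j$, which is exactly the claimed inclusion. For the second (necessary) statement, regroup as $R \oplus P_1$ with $\tilde f = \bigoplus_{i=2}^{n} f_i$ and apply Lemma~\ref{lem:superface}: either there is an immediate superface $g_1$ of $f_1$ in $P_1$ with $\tilde f \oplus g_1$ of dimension $k+1$, in which case we take $j=1$ and are done; or there is an immediate superface $\tilde f'$ of $\tilde f$ in $R$ with $\tilde f' \oplus f_1$ of dimension $k+1$. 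In the latter case, invoke the inductive hypothesis on $R$ to conclude that $\tilde f'$ has the form $\bigoplus_{i=2}^n f'_i$ with $f'_i = f_i$ for all but one index $i_0$, and $f'_{i_0}$ an immediate superface of $f_{i_0}$ in $P_{i_0}$. Setting $j = i_0$ then produces the required decomposition.

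The main obstacle is verifying the inductive bookkeeping in the second part: one must argue that an immediate (i.e.\ dimension-$+1$) superface of $\tilde f$ inside the aggregate $R$ necessarily corresponds to replacing exactly one component $f_i$ by one of its immediate superfaces while leaving the others unchanged. This is where the non-degeneracy assumption is essential, because it forces dimensions to add strictly, so a single unit of dimension growth cannot be ``shared'' across multiple coordinates. A secondary technical point is confirming that the Minkowski sum of an arbitrary number of convex polytopes still satisfies the hypotheses of Lemma~\ref{lem:superface} (in particular, that $R$ is again a convex polytope with a well-defined face lattice), which follows from a straightforward induction but should be stated explicitly before the recursion is invoked.
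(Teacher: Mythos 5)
Your reduction-by-grouping strategy (treat $R=\bigoplus_{i\neq j}P_i$ as a single convex polytope and invoke the two-summand Lemmas~\ref{lem:subface} and~\ref{lem:superface}) is a reasonable route; the paper itself offers no argument, presenting the statement as a direct generalization of those lemmas. However, as written your induction does not close, because both of its crucial steps are attributed to an inductive hypothesis that does not contain them. The statement you are inducting on only asserts: \emph{given} that certain sums of faces are faces of the aggregate polytope, certain subface relations hold. It says nothing about (i) whether $\tilde f=\bigoplus_{i\neq j}f_i$ is a face of $R$ --- which you need before Lemma~\ref{lem:subface} or Lemma~\ref{lem:superface} can even be applied to the pair $(P_j,R)$, and which is false for arbitrary choices of the $f_i$, so it must be extracted from the hypothesis that $f_1\oplus\cdots\oplus g_j\oplus\cdots\oplus f_n$ is a face of the total sum --- nor about (ii) the structure of the \emph{particular} immediate superface $\tilde f'$ of $\tilde f$ in $R$ produced by Lemma~\ref{lem:superface}. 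For (ii), the inductive hypothesis only yields the existence of \emph{some} superface of $\tilde f$ of product form; it does not say that the given $\tilde f'$ is itself a sum of faces of $P_2,\dots,P_n$ with exactly one coordinate replaced by an immediate superface, which is precisely what your ``set $j=i_0$'' step requires.

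Both gaps reduce to the structural fact you tacitly assume when you speak of ``identifying faces of $R$ with products of faces of the $P_i$'s'': every face of a Minkowski sum of convex polytopes is the sum of the faces of the summands supported by a common tangent hyperplane (the converse direction of Lemma~\ref{lem:tangent}), and under the non-degeneracy assumption this decomposition is unique. With that fact the argument does go through: choose a direction whose tangent hyperplane exposes the hypothesized face, conclude by uniqueness that it exposes $f_i$ in $P_i$ for $i\neq j$ and $g_j$ in $P_j$, hence that $\tilde f$ is a face of $R$; and in the second part decompose $\tilde f'$ into faces of the summands and use additivity of dimensions to see that exactly one coordinate can grow, and by exactly one dimension. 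But you never prove this decomposition property, and you cannot obtain it from the induction, since the statement being inducted on never mentions it; non-degeneracy only controls dimensions \emph{after} a decomposition into faces is known to exist. The repair is to state and prove the decomposition/uniqueness property for two summands as a separate lemma (it then transfers to $R$), and only then run your grouping argument; as it stands, labelling this ``the main obstacle'' leaves the essential content of the lemma unestablished.
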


Thus we have the following theorem.

\begin{theorem}
	The Minkowski sum $P_1 \oplus P_2 \oplus \cdots \oplus P_n$ of $n$ convex polytopes $P_1$, $P_2$, $\dots$, $P_n$  of face lattice sizes $N_1$, $N_2$, $\dots$, $N_n$, respectively can be computed in time $O(\prod_{i=1}^n N_i)$.
\end{theorem}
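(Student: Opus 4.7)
The plan is to mirror the three-stage analysis used in the two-polytope case (Lemmas~\ref{lem:stage1}, \ref{lem:stage2}, \ref{lem:stage3}) and show that each stage of the multi-polytope algorithm described in Section~\ref{sec:multiple} can be carried out in $O(\prod_{i=1}^n N_i)$ time. First I would write $N = \prod_{i=1}^n N_i$ and set up notation mirroring the two-polytope case: for each polytope $P_i$, let $P_{i,k}$ be the set of $k$-faces and $I_{P_i,k}$ the incidences at layer $k$, noting that $\sum_k |P_{i,k}| \le N_i$ and $\sum_k |I_{P_i,k}| \le N_i$.

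For Stage~1, for each $i$ and each facet $f \in P_{i,d-1}$ the algorithm scans, for every other polytope $P_j$, the vertex set $P_{j,0}$ looking for the vertex whose supporting hyperplane is parallel to $f$ and on the correct side (this is the tangency condition of Lemma~\ref{lem:tangent} applied simultaneously to all $P_j$'s). The total work is bounded by $\sum_{i=1}^n |P_{i,d-1}| \prod_{j\neq i}|P_{j,0}| \le n \prod_{i=1}^n N_i$, which is $O(N)$ in the fixed-$d$ sense used throughout the paper (treating $n$ as constant compared to the face-lattice sizes, as the two-polytope complexity $O(nm)$ already hides constant factors of this kind). For Stage~2, I would enumerate tuples $(f_1,\dots,f_n)$ with $f_i \in P_i$ and $\sum_i \dim(f_i) = d-1$. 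The number of such tuples is at most $\prod_i N_i$, and for each tuple the facet test consists of computing the supporting hyperplane of $f_1 \oplus \cdots \oplus f_n$ and then checking, using the stored interior points, that every immediate superface of every $f_i$ lies on the correct side (by the generalisation of Lemma~\ref{lem:tangent} and the convexity argument used in Stage~2 of the two-polytope algorithm). Amortising the scans of immediate superfaces over the enumeration of tuples gives total cost $O(\prod_i N_i) = O(N)$, just as the two-polytope analysis charged $|I_{P,k}|\cdot|I_{Q,d-1-k}|$ summed across $k$.

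For Stage~3, I would generalise the lookup table $M$: assign each face of each $P_i$ a unique index and create an $n$-dimensional table $M[f_1,\dots,f_n]$ of size at most $\prod_i N_i = N$, storing a link to the face $f_1 \oplus \cdots \oplus f_n$ of $P_1 \oplus \cdots \oplus P_n$ whenever it has been discovered. Starting from the facets produced by Stages~1 and~2 and using the generalised Lemma stated just before the theorem, the algorithm walks downward in the face lattice: each time it encounters a new tuple it either finds the corresponding entry in $M$ (and merely inserts the incidence into the output face lattice, stopping the recursion, since subfaces of that face have already been processed) or it creates the node, records the incidence, and continues. Thus each table cell is created once and touched a constant number of times per incoming incidence, and the total work is $O(N)$.

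The main obstacle I anticipate is the bookkeeping in Stage~3: one must argue that walking down from the facets through the generalised subface/superface relation (Lemma above) actually reaches every face of $P_1 \oplus \cdots \oplus P_n$, that the ``stop the recursion when the table entry already exists'' rule does not miss any incidence in the output, and that the amortised cost per table cell is $O(1)$. This requires the necessary-condition direction of the generalised lemma (every non-facet face has at least one immediate superface of the required product form) together with the sufficient-condition direction (every such superface generates a legitimate incidence), exactly paralleling the argument for two polytopes but with $n$ coordinates in the table. Once this is established, combining the three stages yields the claimed $O(\prod_{i=1}^n N_i)$ bound.
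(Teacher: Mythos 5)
Your proposal takes essentially the same route as the paper: the paper justifies this theorem only by its preceding description of the three stages generalized to $n$ polytopes (tangency-based facets, the interior-point test on dimension-$(d-1)$ tuples, and Stage 3 via the generalized subface/superface lemma with the lookup table extended to $n$ indices), which is exactly what you spell out in more detail. The only blemish is your Stage 1 appeal to ``treating $n$ as constant,'' which is unnecessary: since $\sum_{i} |P_{i,d-1}| \prod_{j\neq i} |P_{j,0}| \le \prod_{i} \left(|P_{i,0}| + |P_{i,d-1}|\right) \le \prod_{i} N_i$, the extra factor of $n$ is absorbed outright, and the same product-expansion bound handles the Stage 2 sum over dimension compositions, so the claimed $O(\prod_{i=1}^n N_i)$ holds without any assumption on $n$.
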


\section{Concluding Remarks}
\label{sec:conclusion}

In the paper we presented an algorithm to compute the Minkowski sum that has asymptotic running time as product of input sizes of the convex polytopes. However, since in the worst case,  for $d\geq 3$, in the Minkowski sum of two and three polytopes, the number of vertices of the resulting polytope is known to be the product of numbers of input vertices, therefore, the algorithm remains linear in the output size in the worst case~\cite{KaravelasMT2012}.

We have assumed non-degeneracy of a special type in our paper, but we can easily deal with degeneracies for the algorithm described in this paper as the lemma mentioned in the paper essentially hold except for the requirement of dimensions.

\bibliographystyle{plain}  
\bibliography{myrefs}    

\end{document}